\documentclass[a4paper,12pt]{amsart} 

%todo : Modified Ehrenfest comment on why, 
% Hartree-Fock as Ehrenfest

\usepackage{graphicx}
\usepackage{algorithm}
\usepackage{algorithmic}
\usepackage{fullpage}
\usepackage{verbatim}
\usepackage{subfigure}
\numberwithin{equation}{section}

\newtheorem{theorem}{Theorem}[section]

\newtheorem{lemma}[theorem]{Lemma}

\theoremstyle{definition}

\newtheorem{remark}[theorem]{Remark}

\newcommand{\rset}{\mathbb{R}}      %
\newcommand{\tpsi}{\psi} %

%
% References in the text by M3AS standard

%
%
%
%
% Macros for printing comments in the text
%
% \PPCOMM - comments by Petr
% \ASCOMM - comments by Anders

%
% Some specific macros for the Schroedinger paper
%

%\def\LPROD#1#2{( {#1},{#2})}
%\def\EPROD#1#2{\langle {#1},{#2}\rangle}
\def\LPROD#1#2{\langle {#1},{#2}\rangle}

\def\BO{0}%{\mathrm{BO}}}

\def\HOPER{\hat{H}}
\def\VOPER{{V}}
\def\SOPER{\mathcal{S}}
\def\ROPER{\mathcal{R}}

\def\BTOPER{{\tilde{\mathcal{B}}}}

%%%%%%%%%%%%%%%%%%%%%%%%%%%%%%%%%%%%%%%%%%%%%%%%%%%%%%%%%%%%%%
%
% PPCOMM: NEW MACROS AND NOTATION 2/12/11
%
%%%%%%%%%%%%%%%%%%%%%%%%%%%%%%%%%%%%%%%%%%%%%%%%%%%%%%%%%%%%%%
\def\PSHARP#1{\Pi{#1}}
\def\PSHARPT#1#2{\Pi({#1}){#2}}

%\def\PHASE{(\EPROD{\CHECKX}{\CHECKP} -\LFT{\theta}(\HATX,\CHECKP))}
%\def\\PHASED{\left(\EPROD{\GRAD_{\CHECKP}\LFT{\theta}(\HATX,\CHECKP_0)}{\CHECKP}-\LFT{\theta}(\HATX,\CHECKP)\right)}

%\def\DTHETA{\left[\DPLFT{\theta}(\CHECKP_0)-\DPLFT{\theta}(\CHECKP)\right]}

%\def\WIDEHATPHI{\widehat\Phi}

% END OF NEW NOTATION
%
%%%%%%%%%%%%%%%%%%%%%%%%%%%%%%%%%%%%%%%%%%%%%%%%%%%%%%%%%%%%%%%%%%%%%%%%%%%%%%%%%%%%%%%%%%%%%%%%%%%%%%%%%%%%%%%

 %\tilde\psi}

          % Airy function
 % Fourier transform of Airy
             % Legendre transform

        % a symbol for a generic set/neighbourhood

          % \tilde g in the section about density with caustics

%
% Use the following macros
%  \COMMA   - comma at the end of a formula
%  \PERIOD  - period at the end of a formula
%  \EXP{}   - instead of \exp 

%%%%%%%%%%%%%%%%%%%%%%%%%%%%%%%%%%%%%%%%%%%%%%%%%%%%%%
%
% Letters
%
%
% Calligraphic letters
%

\def\Oo{\mathcal{O}}

%
% Mathbb letters
%

%%%%%%%%%%%%%%%%%%%%%%%%%%%%%%%%%%%%%%%%%%%%%%%%%%%
%
% Mathematical symbols
%
%%%%%%%%%%%%%%%%%%%%%%%%%%%%%%%%%%%%%%%%%%%%%%%%%%%

%
% Differential Operators
%

 %{\nabla_X}

%
% Punctuation in math mode
%
          % use for spaces in formulae
   % use for even more spaces in formulae
\def\COMMA{\,,}             % use for commas at the ends of formulae
\def\PERIOD{\,.}            % use for periods at the ends of formulae
\def\SEP{{\,|\,}}           % use for seperator in set defs, like {x\in R \SEP

%
% References and numbers in tables
%
      % use for references to formulae
       % use for numbers that are comp. generated

%
% Special symbols
\def\Iunit{i}

\def\BIGO{\Oo}

%
% Shortcuts for spaces
%
        % print in bf - use for vectors 
                  % Omega
            % space W^{#1}
         % space W^{1,p}
              % space L^p
                   % space L^2
            % space L^\infty
                   % space L^2
                     % space C_b

%
% Norms
%

%
% Scalar products
%

%
% Shortcuts for some expressions
%

%\newtheorem{theorem}{Theorem}[section]
%\newtheorem{lemma}{Lemma}[section]
%\newtheorem{proposition}{Proposition}[section]
%\newtheorem{conjecture}{Conjecture}[section]
%\newtheorem{remark}{Remark}[section]
%\newtheorem{definition}{Definition}[section]

%
%

\begin{document}

 \author{Ashraful Kadir}
 \address{Department of Mathematics,
    Kungl. Tekniska H\"ogskolan,
   100 44 Stockholm,
   Sweden}
 \email{smakadir@csc.kth.se}

 \author{Mattias Sandberg}
 \address{Department of Mathematics,
   Kungl. Tekniska H\"ogskolan,
   100 44 Stockholm,
   Sweden}
 \email{msandb@kth.se}

 \author{Anders Szepessy}
 \address{Department of Mathematics,
   Kungl. Tekniska H\"ogskolan,
   100 44 Stockholm,
   Sweden}
 \email{szepessy@kth.se}
%%%%%%%%%%%%%%%%%%%%%%%%%%%%%%%%%%%%%%%%%%%%%%%%%%%%%%%%%%%%%%%%%%%%%%%%%%%%%%%%%%%%%%%%%%%%
%
% Headers for AMSART style, commented out

%
% Headers from AMSART - commented out
\title[Car-Parrinello and  Ehrenfest MDs with adaptive mass]{An adaptive mass algorithm for Car-Parrinello and Ehrenfest { ab initio} molecular dynamics}

\subjclass{Primary: 65P10;  Secondary: 81-08, 81Q15}
\keywords{Car-Parrinello method, Ehrenfest dynamics, {\it ab initio} molecular dynamics, adaptive mass ratio}
%\thanks{
%The research of P.P.  and A.S. was partially supported by the National Science Foundation under the grant
%NSF-DMS-0813893 and  Swedish Research Council grant 621-2010-5647, respectively.
%
%}

\begin{abstract}
Ehrenfest and Car-Parrinello molecular dynamics are computational alternatives to approximate Born-Oppenheimer
molecular dynamics without solving the electron eigenvalue problem at each time-step.
A non-trivial issue is to choose the artificial electron mass parameter appearing in the Car-Parrinello method to achieve  both good accuracy and high computational efficiency.
%and Ehrenfest molecular dynamics. 
In this paper, we propose an algorithm, motivated by the Landau-Zener probability, 
to systematically choose an
artificial mass dynamically, 
 which makes the Car-Parrinello and Ehrenfest molecular dynamics methods dependent
only on the problem data. Numerical experiments for simple model problems show that the time-dependent adaptive 
artificial mass parameter improves the efficiency of the Car-Parrinello and Ehrenfest molecular dynamics.
\end{abstract}

\maketitle
\tableofcontents

%%%%%%%%%%%%%%%%%%%%%%%%%%%%%%%%%%%%%%%%%%%%%%%%%%%%%%%%%%%%%%%%%%%%%%%%%%%%%%%%
\section{Introduction to {\it ab initio} molecular dynamics}
%Schrodinger Equation
{\it Ab initio} molecular dynamics is used to computationally determine properties of matter, e.g.\ in material science and biomolecular systems, cf.\ \cite{marx-hutter}. The attractive part is the first principle approach, i.e.\ that empirical model parameters need not be set, and the drawback is the higher computational cost, as compared to molecular dynamics based on empirical
potentials. Nevertheless, the development of computer hardware and computational algorithms has stimulated a
growing activity in {\it ab initio} molecular dynamics simulations, see \cite{marx-hutter}.
%This work presents and motivates numerical adaptive-mass algorithms for {\it ab initio} molecular dynamics.
%Our motivation aiming at further improving molecular dynamics algorithms is the many successful molecular dynamics simulations
%to determine properties of matter, e.g.  in material science and bimolecular systems, cf. \cite{marx-hutter, tamar}. 
In particular,  under certain assumptions,
{\it ab initio} molecular dynamics can be used to approximate the more fundamental quantum observables based on the time-independent Schr\"odinger equation, see \cite{lebris_acta}.
The {\it time-independent  Schr\"odinger} eigenvalue equation 
\begin{equation}\label{eq:schrod}
  \HOPER(X) \Phi(X) = E\Phi(X) \COMMA
\end{equation}
models many-body (nuclei-electron) quantum systems, 
where  the eigenvalue $E\in\rset$ is the energy of the system and  $\Phi:\rset^d\rightarrow \mathbb C^n$ is the corresponding 
eigenfunction for the Hamiltonian operator
\[ %\begin{equation}\label{eq:schrod-Hamiltonian}
   \HOPER(X)=  - \frac{1}{2} M^{-1}I_n \Delta + \VOPER(X)\COMMA
\] %\end{equation}
with the nuclei position 
coordinates $X \in \rset^d,$  the $n\times n$ identity matrix $I_n$,  
the nuclei-electron mass ratio $M$, the Laplace operator $\Delta=\sum_{j=1}^d\frac{\partial^2}{\partial X_j^2}$
and the $n\times n$ Hermitian
matrix $\VOPER(X)$. For simplicity in the presentation we consider an $n\times n$ matrix approximation $V$  of the electron kinetic energy, electron-electron Coulomb repulsion potential and electron-nuclei Coulomb interaction potential. Since no unknown parameters are involved it makes the Schr\"odinger equation an {\it ab initio} model which forms
the basis for computational chemistry, cf. \cite{lebris_acta}.
%For $d\gg 1$ it is computationally expensive to solve~\eqref{eq:schrod}
%due to the large dimension of the solution space. 
In high dimension $d\gg 1$ equation \eqref{eq:schrod} is computationally expensive to solve.
%There are alternatives to obtain
%approximations to observables, e.g. $\int_{\rset^d}g(X)|\Phi(X)|^2dX$  for desired functions $g:\rset^d\rightarrow\rset$. 
%
The Born-Oppenheimer molecular dynamics method with 
constant number of particles, volume and total energy, based on \cite{BO} and formulated in 
e.g. \cite{lebris_hand,lebris_acta},  is 
less expensive and the basic
%a popular 
computational alternative 
for approximating quantum observables such as $\int_{\rset^d}g(X)|\Phi(X)|^2dX$  for desired functions $g:\rset^d\rightarrow\rset$, 
when $M\gg 1$, $d\gg 1$ and the smallest  and second smallest eigenvalues of the potential matrix $V(X)$ are separated, see \cite{how-accurate-BO}.
The {\it Born-Oppenheimer ab initio molecular dynamics} is given by the Hamiltonian system 
\[ %\begin{equation}\label{eq:BO}
\begin{split}
\dot X_\tau &= \nabla_P  H_{bo}(P_\tau,X_\tau) \\
\dot P_\tau & = -\nabla_X   H_{bo}(P_\tau,X_\tau) \COMMA
\end{split}
\] %\end{equation}
with the Hamiltonian  $ H_{bo}(P,X) := |P|^2/(2M)+ \lambda_0(X)$ where $X_\tau\in\rset^d$ and $P_\tau\in\rset^d$ are the nuclei 
position and momentum coordinates, respectively, at time $\tau$, and $\lambda_0:\rset^d\rightarrow \rset$ is the 
smallest 
eigenvalue of the electron eigenvalue problem
\[%\begin{equation}\label{eq:BO-potential}
\VOPER(X) \Psi_i(X) = \lambda_i(X) \Psi_i(X),\quad i = 0,1,2,\ldots,n-1\COMMA
\]%\end{equation}
for fixed nuclei position $X$. The initial condition is chosen to satisfy $H_{bo}(P_0,X_0)=E$, so that the total energy of the system is $E$.
By changing to the slower timescale $t = M^{-1/2}\tau$ we obtain the Born-Oppenheimer Hamiltonian 
$H_{0}(P,X) := |P|^2/2 + \lambda_0(X)$ and the dynamics 
\begin{equation}\label{eq:BO-scaled}
\begin{split}
\dot X_t &= P_t\, ,\\
\dot P_t&= -\nabla\lambda_0(X_t)\, ,
\end{split}
\end{equation}
where the nuclei move a distance of order one in time one, independent of the large mass ratio parameter $M$.
Time discretization of~\eqref{eq:BO-scaled} requires that the ground state eigenvalue $\lambda_0$ is determined 
in each time step. For a large electron system, it can be computationally
expensive to solve the ground state electron eigenvalue problem
\begin{equation}\label{eq:BO-ground-state-potential}
\VOPER(X) \Psi_0(X) = \lambda_0(X) \Psi_0(X),
\end{equation}
although efficient iterative eigenvalue algorithms has improved the efficiency, see \cite{lebris_acta}.
Car-Parrinello molecular dynamics as formulated in the seminal work \cite{car_par}, see also \cite{marx-hutter,lebris_acta}, is an alternative to Born-Oppenheimer molecular dynamics,
where the solution of the algebraic eigenvalue problem is approximated by a relaxation method with  fictitious electron dynamics,
avoiding explicit solution of the eigenvalue problem~\eqref{eq:BO-ground-state-potential}.
{\it Ab initio} molecular dynamics, in fact, became more widely used after the introduction of the Car-Parrinello computational method, see ~\cite{marx-hutter}.  Computational comparisons  between Born-Oppenheimer and
Car-Parrinello dynamics  currently offer no general winner, see \cite{marx-hutter}, \cite{tangney}.
Ehrenfest molecular dynamics~\cite{marx-hutter}, which is closely related to the Car-Parrinello method, has the same property to determine the 
ground state dynamically without explicitly solving an electron eigenvalue problem in each time-step.
Theoretical evaluation of the Car-Parrinello method in \cite{Pastori, bornemann_schutte} shows the compromise between computational accuracy and work: accurate simulations require larger fictitious nuclei-electron  mass ratio as the eigenvalue gap, 
$\max_X(\lambda_1(X)-\lambda_0(X) )$, becomes smaller 
and the computational work, proportional to the number of time steps,  
increases as the fictitious mass ratio increases due to faster oscillations in the wave function. 
 A non trivial computational issue in the Car-Parrinello method is to determine the fictitious  electron mass relaxation parameter. In fact the lack of a precise method to determine the fictitious mass parameter is often mentioned as the main drawback of the Car-Parrinello method, cf. \cite{lebris_acta}.

Bornemann and Sch\"utte \cite{bornemann} presented an algorithm to automatically determine the fictitious electron mass parameter and to improve
computational efficiency by dynamically choosing the parameter as a time-dependent function which is piecewise 
constant on chosen time intervals. Their algorithm is based on limiting the maximum value of the fictitious electron kinetic energy and the result depends on the length of the time-intervals, which needs to be
optimized in numerical experiments; the time intervals are needed to average out the oscillatory behavior of the fictitious kinetic energy.
Inspired by \cite{bornemann}, 
the aim of this work is to construct and analyze an improved adaptive algorithm for determining the fictitious mass ratio with less parameters, which is also applicable to Ehrenfest dynamics, and  further improve the successful Car-Parrinello method. A second aim is to compare the computational efficiency for Ehrenfest and Car-Parrinello dynamics.

The {\it Car-Parrinello molecular dynamics}, in the settings of the simple example problems in this paper, is given by
\begin{equation}\label{eq:cp}
\begin{split}
\dot X_t & = P_t,\\
\dot P_t &= -\frac{\langle \psi_t,\nabla V(X_t)\psi_t\rangle}{\langle \psi_t,\psi_t\rangle},\\
\frac{1}{M_{CP}}\ddot \psi_t &=  -V(X_t)\psi_t + \Lambda\psi_t,
\end{split}
\end{equation}
%and the Hamiltonian
%\begin{equation}\label{eq:cp-hamiltonian}
%H_{CP} = \frac{|P|^2}{2} + \langle\psi,V\psi\rangle + \frac{M|\psi|^2}{4} \COMMA
%\end{equation}
where $M_{CP}$ is the fictitious nuclei-electron mass ratio, $\psi:[0,\infty)\rightarrow \mathbb{C}^n$ %is 
the electron wave function, 
and $\Lambda\in\rset$ the Lagrangian variable for the constraint $|\psi_t| = 1$. Here $\langle\cdot, \cdot\rangle$ denotes the standard scalar product in $\mathbb C^n$.
The fictitious nuclei-electron mass ratio $M_{CP}$ can be considered as a kind of control parameter, and by choosing $M_{CP}(t)$ adaptively, as a 
time-dependent function, the 
computational process can be made more precise and efficient.
% by automatic 
%more efficient, and also independent of the 
%choice of the artifical mass ratio parameter.

The related {\it Ehrenfest molecular dynamics} is given by
\begin{equation}\label{eq:ehrenfest}
\begin{split}
\dot X_t & = P_t\\
\dot P_t &= -\frac{\langle \psi_t,\nabla V(X_t)\psi_t\rangle}{\langle \psi_t,\psi_t\rangle}\\
\frac{i}{M_E^{1/2}}\dot \psi_t &= V(X_t)\psi_t \COMMA
\end{split}
\end{equation}
where $M_E$ is the artificial nuclei-electron mass ratio parameter, $\psi_t\in \mathbb C^n$ the electron wave function and $i$ the imaginary unit. 

The Car-Parrinello method is often formulated for the electron ground state determined by the Hartree-Fock method or the Kohn-Sham density functional method, based on electron orbitals $\phi_\nu:\rset^3\rightarrow\rset, \ \nu=1,2,3,\ldots, n$. 
Let us quickly review the Hartree-Fock method  to compare it to the  formulations  \eqref{eq:ehrenfest}, \eqref{eq:cp} and \eqref{eq:BO-scaled}.
In a basis $\{\chi_j\}_{j=1}^R$, with $\chi_j:\rset^3\rightarrow\rset$, the {\it Hartree-Fock ground state} energy can be written
\[
E^{HF}=\mbox{trace}(hCC^*) + \frac{1}{2}\mbox{trace}(GCC^*)\, ,
\]
see \cite{lebris_acta}, which by the Euler-Lagrange equations lead to a non linear ground state eigenvalue problem.
Here the electron orbitals are assumed to be orthogonal, i.e. $C^*SC=I_n$, with $I_n$ denoting the $n\times n$ identity matrix,
 $C$ is the $R\times n$ coordinate matrix, i.e.  $\phi_\nu(x)=\sum_{j=1}^RC_{j\nu}\chi_j(x)$,  
$S$ is the $R\times R$ overlap matrix $S_{jk}:=\int_{\rset^3}\chi_j(x)\chi_k(x) dx$ and
the $R\times R$ Hermitian matrices $h$ and $G$ are defined as
\[
\begin{split}
h_{ij} &:= \frac{1}{2} \int_{\rset^3}\nabla \chi_i(x)\nabla\chi_j(x) dx
+\big(\sum_{k=1}^n \frac{z_k}{|x-X_k|} + \sum_{ k\ne\ell } \frac{z_k z_\ell }{|X_k-X_\ell |}\big) S_{ij}\\
%&\qquad + \int_{\rset^3} \sum_{k=1}^n \frac{z_k}{|x-X_k|} \chi_i(x)\chi_j(x) dx
%+ \int_{\rset^3} \sum_{k\ne\ell } \frac{z_k z_\ell }{|X_k-X_\ell |} \chi_i(x)\chi_j(x) dx\, ,\\
D &:= CC^*\, ,\\
\Gamma_{jk}^{\ell m} 
 &:=  \int_{\rset^3} \int_{\rset^3} \frac{\chi_j(x)\chi_k(x)\chi_\ell(x')\chi_m(x')}{|x-x'|} dx dx'\\
J_{jk} &: = \sum_{\ell, m =1}^R \Gamma_{jk}^{\ell m} D_{\ell m}\, ,\\
K_{jk} &:= \sum_{\ell, m =1}^R \Gamma_{jm}^{k \ell}  D_{\ell m}\, ,\\
G_{jk}&:=  J_{jk} - K_{jk}\, .\\
\end{split}
\]
The Ehrenfest dynamics corresponding to the Hartree-Fock ground state is
\begin{equation}\label{TDHF}
\begin{split}
\ddot X_t &= - \mbox{trace} \big(\nabla_X h(X_t) C_t C_t^*\big)\, ,\\
\frac{i}{M_E^{1/2}} S\dot C_t &= F(C_tC_t^*)C_t\, , 
\end{split}
\end{equation}
with the Fock matrix $F(CC^*):= h+G$ and initially $C_0^*SC_0=I_n$. 
This system is also called the time-dependent Hartree-Fock method.
Note that the dynamics implies $C_t^*SC_t=I_n$ for all time $t$, since $F$ is Hermitian.
%Equation \eqref{eq:ehrenfest} is related to  \eqref{TDHF} by using the scalar product
%%$V:=S^{-1}F$ and
% $\langle A,B\rangle:=\mbox{trace}(AB^*)$ and $\psi=C$.
Equation \eqref{eq:ehrenfest} is related to  \eqref{TDHF} by using the scalar product
%$V:=S^{-1}F$ and
 %$\langle A,B\rangle:=\mbox{trace}(AB^*)$ and $\psi=C$.
 $\mbox{trace}(AB^*)$ of two matrices $A,B$ and letting $\psi=C_{\cdot j}$:
 the time-dependent Hartree-Fock method \eqref{TDHF} can be written as
 \begin{equation}\label{TDHFn}
 \begin{split}
\ddot X_t &= - \sum_{j=1}^n \langle C(t)_{\cdot j},\nabla_X h(X_t) C(t)_{\cdot j}\rangle\, ,\\
\frac{i}{M_E^{1/2}} S\dot C(t)_{\cdot j} &= F(C_tC_t^*)C(t)_{\cdot j}\, ,\quad j=1,2,\ldots, n, 
\end{split}
\end{equation}
which shows that the Ehrenfest dynamics \eqref{eq:ehrenfest} has the form of a time-dependent Hartree-Fock equation with one orbital but without the non linear dependence on $CC^*$ in the Fock matrix $F$.
Similarly the Hartree-Fock Car-Parrinello method takes the form
\[
\begin{split}
\ddot X_t &= - \mbox{trace} \big(\nabla_X h(X_t) C_t C_t^*\big)\, ,\\
\frac{1}{M_{CP}} S\ddot C_t &= -F(C_tC_t^*)C_t + SC_t\Lambda_t\, , 
\end{split}
\]
where $\Lambda_t$ is the $n\times n$ matrix of Lagrange multipliers for the constraints $C_t^*SC_t=I_n$.

Section \ref{sec:adaptive_mass} formulates algorithms
to automatically choose
the artificial mass ratio parameter for  Car-Parrinello and Ehrenfest molecular dynamics. The algorithms  are motivated by the Landau-Zener probability, as explained in Section \ref{sec:LZ}, and include only one parameter, which is problem independent, in contrast to the problem dependent constant mass ratio parameter. 
The numerical experiments in Section \ref{sec:nr} show that the  choice of time-dependent adaptive mass ratio also
makes the computational methods more efficient, and  Car-Parrinello dynamics is slightly more efficient than Ehrenfest dynamics. Section \ref{sec:alt} presents an alternative adaptive mass algorithm and  the appendix derives an estimate of the probability to be in excited states for Ehrenfest dynamics, which also motivates the adaptive choice of the mass ratio for a general Ehrenfest dynamics model.
%%%%%%%%%%%%%%%%%%%%%%%%%%%%%%%%%%%%%%%%%%%%%%%%%%%%%%%%%%%%%%%%%%%%%%%%%%%%%%%%
\section{Adaptive mass algorithms}\label{sec:adaptive_mass}
In this section we present adaptive mass ratio algorithms  for Ehrenfest  and Car-Parrinello  dynamics, in \eqref{eq:ehrenfest}  and \eqref{eq:cp},
approximating the ground state Born-Oppenheimer
molecular dynamics in \eqref{eq:BO-scaled}.
%solution by Ehrenfest molecular dynamics \eqref{eq:ehrenfest} and Car-Parrinello dynamics \eqref{eq:cp}.

Time discretizations of the Car-Parrinello differential equation \eqref{eq:cp} require solution of a nonlinear system of equations,  related to the orthogonality constraints,  in each time step.
Ehrenfest dynamics automatically satisfies these constraints which yields simpler computations.
Section \ref{sec:nr} shows on the other hand that Ehrenfest dynamics requires 
somewhat larger number of time-steps than the Car-Parrinello method, although the difference
is smaller for the adaptive mass algorithms, see~Figure~\ref{fig:1D-delta-pE}.

Ehrenfest molecular dynamics can effectively provide the electron ground state 
for a molecular system, provided the electron ground state eigenvalue problem is solved initially.
For a case when the electron potential surfaces have an avoided crossing, i.e.~the spectral
gap between the smallest electron eigenvalues is small, a large nuclei-electron mass ratio $M_E$ 
is needed to approximate Born-Oppenheimer dynamics by Ehrenfest dynamics, see~\cite{how-accurate-BO}.
On the other hand the computational work (proportional to the number of time steps)
 to simulate Ehrenfest molecular dynamics is of order $\mathcal{O}(\sqrt{M_E})$,
since the length of the time-steps can be at most $\mathcal{O}(M_E^{-1/2})$
to resolve  
the wave function oscillations with frequency proportional to 
$\sqrt M_E$. 

Hence, using a large mass ratio $M_E$  is computationally
expensive. However, for many problems the difference of the smallest eigenvalues $\lambda_1(X)-\lambda_0(X)$
is small only  in a small $X$-region. In such cases
it is computationally beneficial to use a large mass ratio close to the avoided crossing 
and to use  smaller mass ratios in the other parts of the domain.
Here we formulate an adaptive algorithm to dynamically determine a fictitious mass ratio  for Ehrenfest
and Car-Parrinello molecular dynamics, particularly suited for molecular dynamics applications with
nearly crossing potential surfaces.

We choose the {\it adaptive mass ratio functions} $M_E:\rset_+\rightarrow\rset_+$ and 
$M_{CP}:\rset_+\rightarrow\rset_+$ for Ehrenfest and Car-Parrinello molecular dynamics
based on the following criteria:
\begin{equation}\label{eq:adaptiveM-EH}
M_E(t) := \epsilon^{-2}\max\left(1,\big(\frac{|P_t|^{1/2}}{ |\mu_1(X_t)-\mu_0(X_t)|}\big)^{\gamma_E}\right)\COMMA
\end{equation}
and
\begin{equation}\label{eq:adaptiveM-CP}
M_{CP}(t) := \epsilon^{-2}\max\left(1,\big(\frac{|P_t|^{1/2}}{|\mu_1(X_t)-\mu_0(X_t)|}\big)^{\gamma_{CP}}\right)\COMMA
\end{equation}
where $\gamma_E$ and $\gamma_{CP}$ are positive constants optimally chosen by 
$\gamma_E=4$ and $\gamma_{CP}=2$, as explained in Section \ref{sec:LZ}. 
The constant (problem independent) error tolerance parameter $\epsilon$ is related to the probability to be in the excited states, see~\eqref{eq:epsilon-delta},
and the electron eigenvalues are approximated by the Rayleigh quotients
\[%\begin{equation}\label{eq:lambda}
\begin{split}
\mu_0(X_t) & := \frac{\langle\psi_t,V(X_t)\psi_t\rangle}{\langle\psi_t,\psi_t\rangle} \COMMA \\ 
\mu_1(X_t) & := \frac{\langle\dot\psi_t, V(X_t)\dot\psi_t\rangle}{\langle\dot\psi_t, \dot\psi_t\rangle}.
\end{split}
\]%\end{equation}
Since the difference between the smallest electron eigenvalues  %gap between the electron potential surfaces
can be arbitrarily large, there is a lower bound $1/\epsilon^2$ for the artificial mass ratio functions to avoid too large time steps.

%The generalization of the adaptive mass criterion to the time-dependent Hartree-Fock method \eqref{TDHFn} becomes
%\[
%\begin{split}
%M_E(t) &:= \epsilon^{-2}\max_{j=1,\ldots, n}\left(1,\big(\frac{|P_t|^{1/2}}{ \mu_j(X_t)-\mu_j^+(X_t)}\big)^{4}\right)\COMMA\\
%\mu_j(X_t) & := \frac{\langle C(t)_{\cdot j},F(C_tC_t^*)C(t)_{\cdot j}\rangle}{\langle C(t)_{\cdot j},C(t)_{\cdot j}\rangle} \COMMA \\ 
%\mu_j^+(X_t) 
%& := \frac{\langle \dot C(t)_{\cdot j},
%F(C_tC_t^*)\dot C(t)_{\cdot j}\rangle}{\langle \dot C(t)_{\cdot j},\dot C(t)_{\cdot j}\rangle} \, .
%\end{split}
 %\]
 %The less demanding condition $M_E(t) := \epsilon^{-2}\max\left(1,\big(\frac{|P_t|^{1/2}}{ \sum_{j=1}^n(\mu_j(X_t)-\mu_j^+(X_t))}\big)^{4}\right)$ may also work, which will be tested in future work.
 %since the eigenvalue in a crossing almost cancel and the eigenvectors just swap

%%%%%%%%%%%%%%%%%%%%%%%%%%%%%%%%%%%%%%%%%%%%%%%%%%%%%%%%%%%%%%%%%%%%%%%%%%%%%%%%
\section{Adaptive masses motivated by transition probabilities}\label{sec:LZ}
This section motivates the criteria for the adaptive mass ratio algorithms \eqref{eq:adaptiveM-EH} and \eqref{eq:adaptiveM-CP} from the Landau-Zener probability.
The Landau-Zener model~\cite{Z} was introduced to model transitions to an excited state and
it is given by
\[ %\begin{equation}\label{eq:LZ-model}
iM_{LZ}^{-1/2}\dot\phi_t= \left[\begin{array}{cc}
P_*t & \delta\\
\delta & -P_*t\\
\end{array}\right]
\phi_t
\]%\end{equation}
where $\phi:\rset\rightarrow \mathbb C^2$ is the wave function, 
$i$ is the imaginary unit, $t$ is the time parameter,
$M_{LZ}$, $P_*$, and $\delta$ are constant positive parameters,
and $\lim_{t\rightarrow-\infty}\phi(t)= (1,0)$ is the initial data.
In the Landau-Zener model the matrix has the eigenvalues $\lambda_\pm(P_*t)=\pm \sqrt{(P_*t)^2+\delta^2}$
which shows that the electron potential surfaces $\lambda_\pm$ cross at $t=0$  if $\delta=0$ and there is an 
avoided crossing if $\delta$ is positive.
The transition probability in the Landau-Zener model is given
by
\begin{equation}\label{eq:p_LZ}
p_{LZ}:=\lim_{t\rightarrow\infty}|\phi_2(t)|^2 =e^{-\pi\delta^2 M_{LZ}^{1/2}/P_*}\COMMA
\end{equation}
which is the so-called {\it Landau-Zener probability}.

The Ehrenfest molecular dynamics \eqref{eq:ehrenfest}
and the Landau-Zener model are closely related and  the wave functions  are equal, $\psi=\phi$, in the case 
$X_t=P_*t$, $M_E=M_{LZ}$ and 
\[
V(X)=
\left[\begin{array}{cc}
X_t & \delta\\
\delta & -X_t\\
\end{array}\right]\, .
\]
 Since we are interested in approximating  Born-Oppenheimer molecular dynamics, which is the molecular dynamics with the electrons in their ground state, minimizing the transition probability
$p_{LZ}$ is of key interest. If the transition probability is small the Ehrenfest molecular dynamics
solution will be close to the Born-Oppenheimer molecular dynamics solution.

Rewriting equation \eqref{eq:p_LZ} for the Landau-Zener transition probability $p_{LZ}$ we obtain
\[ %\begin{equation}\label{eq:M-LZ}
M_{LZ} = \frac{P_*^2\log^2p_{LZ}}{\pi^2\delta^4}.
\] %\end{equation}
Since $M_E \sim M_{LZ}$ we can relate this to \eqref{eq:adaptiveM-EH} as follows
\begin{equation}\label{eq:epsilon-delta}
\epsilon \sim \frac{\pi}{|\log p_{LZ}|}\COMMA \qquad  |\mu_1(X_t)-\mu_0(X_t)| \sim \delta\COMMA\qquad 
|P_t|=P_*\COMMA\qquad
\gamma_E=4\COMMA
\end{equation}
The chosen constant $\epsilon$, which relates  to the probability to be in  excited states, is problem independent
and $|\mu_1(X_t)-\mu_0(X_t)|/|P_t|^{1/2}$ measures 
the spectral gap locally, which is problem dependent. Experimental tests of $\gamma_E$ in Figure 
\ref{fig:EH-variableExpn} confirm the optimum $\gamma_E=4$.

In  the simplest case of a  constant potential $V$ the second order differential operator in Car-Parrinello dynamics
can be factorized % as for the wave equation
\[
\frac{d^2}{dt^2}+M_{CP}(V-\Lambda)=\big(\frac{d}{dt}+i\sqrt{M_{CP}(V-\Lambda)}\big)
\big(\frac{d}{dt}-i\sqrt{M_{CP}(V-\Lambda)}\big)
\]
into a backward and a forward Ehrenfest dynamics operator for the potential $\sqrt{V-\Lambda}$. This observation motivates to choose $M_{CP}$ equal to $M_E$ for $\mu_1-\mu_0$ replaced by $\sqrt{\mu_1-\mu_0}$, 
cf. Remark \ref{rem1}, which explains $\gamma_{CP}=2$ in \eqref{eq:adaptiveM-CP} as tested numerically in Figure \ref{fig:CP-variableExpn}.

A related way to motivate the relation $M_E\sim |\lambda_1(X_t)-\lambda_0(X_t)|^{-4}|P_t|^2$ and estimate the probability to be in exited states for Ehrenfest dynamics is to
use the orthogonal decomposition 
\[
\psi_t=\bar\psi_0(t)\oplus\psi_t^\perp=:\langle \psi_t,\Psi_0(X_t)\rangle\Psi_0(X_t)\oplus \psi_t^\perp,\]
where $\psi_t^\perp$ is in the orthogonal complement of $\Psi_0(X_t)$. Then
we have by Lemma 5.8 in \cite{md_langevin} the bound
$|\psi_t^\perp|=\mathcal O(M^{-1/2})$; a proof of this and the more precise  estimate $|\psi_t^\perp|=\mathcal O(M^{-1/2}\delta_*^{-2})$, as $M\rightarrow\infty$ with $\delta_*:=\min_{t}(|\lambda_1(X_t)-\lambda_0(X_t)|/|P_t|^{1/2})$, is in the appendix.
% \[\begin{split}
%\psi_t^\perp&=-\sum_{k=0}^\infty\BTOPER^k\ROPER_0(t)\, ,\\
%\BTOPER&:=-iM^{-1/2}\big(V(X_t)-\lambda_0(X_t)\big)^{-1}\frac{d}{dt}\, ,\\
% \ROPER_0(t)&:=iM^{-1/2}\big(V(X_t)-\lambda_0(X_t)\big)^{-1}\langle\psi_t,\Psi_0(X_t)\rangle\dot \Psi_0(X_t)\, .
% \end{split}
% \]
%The lowest order term is therefore $\psi_t^\perp\sim iM^{-1/2}(\lambda_1-\lambda_0)^{-1}\dot \Psi_0(X_t)=\mathcal O( M^{-1/2}(\lambda_1-\lambda_0)^{-2})$, since $|\dot \Psi_0(X_t)|=\mathcal O((\lambda_1-\lambda_0)^{-1})$, 
%and 
Hence the probability to be in excited states is bounded by
\[
p_E=\langle \psi_t^\perp,\psi_t^\perp\rangle=\mathcal O(M^{-1}\delta_*^{-4})\, .
\]
We conclude that the wave function in Ehrenfest dynamics approximates the ground state eigenvector.
Consequently the difference in the forces 
\[
\langle \psi, \nabla(V-\lambda_0)\psi\rangle/\langle\psi,\psi\rangle
=2\Re\langle \psi^\perp,\nabla(V-\lambda_0)\bar\psi\rangle 
+ \langle \psi^\perp, \nabla(V-\lambda_0)\psi^\perp\rangle/\langle\psi,\psi\rangle
\]
also vanishes  as $M\rightarrow\infty$ and the paths  remain close for some time. 

If the probability to be in excited states grows one can reinitialize the wave function $\psi$  by computing
a new ground state eigenvector; a growing probability to be in excited states increases the computed approximation of the Born-Oppenheimer Hamiltonian  $\bar H_0:=|P|^2/2 + \mu_0(X)$ so that $\bar H_0-E$
increased above a tolerance level  is a  simple criterion for reinitialization.

\begin{remark}[Adding scalars to $V$ does not matter]
\label{rem1}If $\psi$ solves Ehrenfest dynamics \eqref{eq:ehrenfest}, 
the function $\psi(t)e^{iM^{1/2}\int_0^t\lambda_0(X_s)ds}$ solves Ehrenfest dynamics \eqref{ehren_trans}
with $V$ replaced by $V-\lambda_0$. Therefore, adding a scalar to the potential only changes the wave function 
$\psi$ by a phase factor. Similarly in Car-Parrinello dynamics \eqref{eq:cp} the Lagrange multiplier satisfies
\[
\begin{split}
\Lambda %= \langle\Lambda\psi,\psi\rangle 
&= \langle V\psi,\psi\rangle + M_{CP}^{-1}\langle \ddot\psi,\psi\rangle\\
&= \langle V\psi,\psi\rangle - M_{CP}^{-1}\langle \dot\psi,\dot\psi\rangle\\
\end{split}
\]
using the constraint $\langle \psi_t,\psi_t\rangle=1$. Therefore, adding a scalar to $V$ also adds the same scalar to $\Lambda$. 
\end{remark}

%Born-Oppenheimer dynamics is a Hamiltonian system with Hamiltonian $H_0(X,P)=|P|^2/2+\lambda_0(X)$ and
%Ehrenfest dynamics \eqref{eq:ehrenfest} 
%is also a Hamiltonian system, with Hamiltonian $|P|^2/2 + C\langle \psi, V(X)\psi\rangle$
%for the constant $C=M^{1/2}/2$, initial data $\langle \psi_0,\psi_0\rangle=1/C$, primal variables $(X,\psi_r)$,
%and dual variables $(P,\psi_i)$; here we use  the partition into real and imaginary parts $\psi=\psi_r+i\psi_i$ and 
%that the dynamics implies that $\langle \psi_t,\psi_t\rangle=1/C$. 
%The value of an observable in the microcanonical ensemble of constant number of particles, volume and energy  in ergodic dynamics is 
%\[
%\lim_{T\rightarrow\infty}T^{-1}\int_0^Tg(X_t,P_t)dt= 
%\lim_{\delta \rightarrow 0+} \frac{\int_{E<H_0(X,P)<E+\delta} g(X,P)dXdP}{\int_{E<H_0(X,P)<E+\delta} dXdP}
%\PERIOD\]
%If we view Ehrenfest dynamics as a perturbation of Born-Oppenheimer dynamics 

%Consequently
%the difference of sampling Ehrenfest and Born-Oppenheimer observables in the microcanonical ensemble 
% is determined by the difference in their Hamiltonians, which is bounded by
%\[
%\frac{\langle \psi, \big(V(X)-\lambda_0(X)\big)\psi\rangle}{\langle \psi, \psi\rangle}
%\le \big(\lambda_{n-1}(X)-\lambda_0(X)\big)
%\frac{\langle \psi_t^\perp,\psi_t^\perp\rangle}{\langle \psi, \psi\rangle}=\mathcal O(M^{-1}(\lambda_1-\lambda_0)^{-4})\, .
%\]

%%%%%%%%%%%%%%%%%%%%%%%%%%%%%%%%%%%%%%%%%%%%%%%%%%%%%%%%%%%%%%%%%%%%%%%%%%%%%%%%
\section{Numerical results}\label{sec:nr}
This section tests for two simple model problems  the adaptive mass ratio algorithms \eqref{eq:adaptiveM-EH} and
\eqref{eq:adaptiveM-CP}.
\subsection{A two dimensional problem}\label{sec:2d-prob}
We  consider the two dimensional, time-independent Schr\"odinger equation~\eqref{eq:schrod} 
with the heavy-particle coordinate~$X = (X_1,X_2) \in \mathbb R^2$, the two-state 
light-particle coordinate $x \in \{x_-,x_+\}$, and the potential matrix
%The Hamiltonian is
%\begin{equation*}
%   \HOPER(X)= \VOPER(X) - \frac{1}{2} M^{-1}I\Delta_X \COMMA
%\end{equation*}
%with
%
\begin{equation}\label{eq:2d-potential}
\VOPER(X) := \Big(\underbrace{\frac{1}{2}(X_1^2+\alpha X_2^2) + \beta\sin(X_1X_2)}_{=: \lambda_s(X)}\Big)I + \eta\left[\begin{array}{cc}
                               \tan^{-1}\frac{X_1-a_1}{\eta} & \tan^{-1}\frac{X_2-a_2}{\eta}\\
                               \tan^{-1}\frac{X_2-a_2}{\eta} & -\tan^{-1}\frac{X_1-a_1}{\eta}
                        \end{array}\right],
\end{equation}
where $I$ is the $2 \times 2$ identity matrix, $\alpha=\sqrt 2$, $\beta = 2$, $\eta = 1/2$, and
$a = (a_1,a_2)\in\mathbb{R}^2$ is a point in the two dimensional space.
For each $X$, we have the eigenvalue problem, $\VOPER(X)\Psi_\pm(X) = \lambda_\pm(X)\Psi_\pm(X)$  
with the eigenvalues \[\lambda_\pm(X) = \lambda_s(X) \pm \eta\sqrt{\left(\tan^{-1}\frac{X_1-a_1}{\eta}\right)^2 + \left(\tan^{-1}\frac{X_2-a_2}{\eta}\right)^2}\, .\]
We denote $\Psi_-(X)$ the ground state eigenvector, and $\Psi_+(X)$  the excited state eigenvector.
We choose the energy $E=2$ and for this case with $\beta=2$ the dynamics appears to be ergodic, see \cite{how-accurate-BO}. In the model there is a conical intersection between the 
two eigenvalue surfaces at $(a_1,a_2)$. Depending on the location of the conical intersection 
the crossing between the potential surfaces, $\lambda_\pm$, may either be inside the {\it classically allowed} 
region~$\{X: \lambda_-(X) \le E\},$ or in the  {\it classically forbidden} region~$\{X: \lambda_-(X) > E\}$
in which case no classical path can pass through the conical intersection.
In this example we choose the conical intersection to be near the boundary, but outside, of the classically allowed
region, see Figure~\ref{fig:eig-levels}. This position of the conical intersection
can be considered as a near avoided conical intersection
inside the classically allowed region, with smaller spectral gap the closer it is to the boundary.
%gives a non-zero but small excitation probability.

We use the symplectic St\"ormer-Verlet discretization method, see \cite{verlet-method}, for the modified translated Ehrenfest molecular dynamics given by
\begin{equation}\label{eq:mod-ehrenfest}
\begin{split}
\dot X_t & = P_t\\
\dot P_t &= -\frac{\langle \psi_t,\nabla V(X_t)\psi_t\rangle}{\langle \psi_t,\psi_t\rangle}\\
iM_E^{-1/2}\dot \psi_t &= \left(V(X_t)-I\frac{\langle\psi_t,V(X_t)\psi_t\rangle}{\langle\psi_t,\psi_t\rangle}\right)\psi_t.
\end{split}
\end{equation}
Note that if $(X_t,P_t,\psi_t)$ solves the standard Ehrenfest dynamics \eqref{eq:ehrenfest}, then \[t\mapsto (X_t,P_t,\psi_te^{iM^{1/2}\int_0^t\frac{\langle\psi_s,V(X_s)\psi_s\rangle}{\langle\psi_s,\psi_s\rangle} ds} )\] solves the modified equation \eqref{eq:mod-ehrenfest}, with the advantage of having less oscillations in $\psi$  if it is close to the ground state.
We choose the initial data $X_0=(-2.0, 0.5)$, 
$P_0 = (1, \sqrt{2(E-\lambda_-)-1)})$, and $\psi_0$ to be the ground state eigenvector of $V(X_0)$.

\begin{figure}[h!]
\centering
\includegraphics[width=0.45\textwidth]{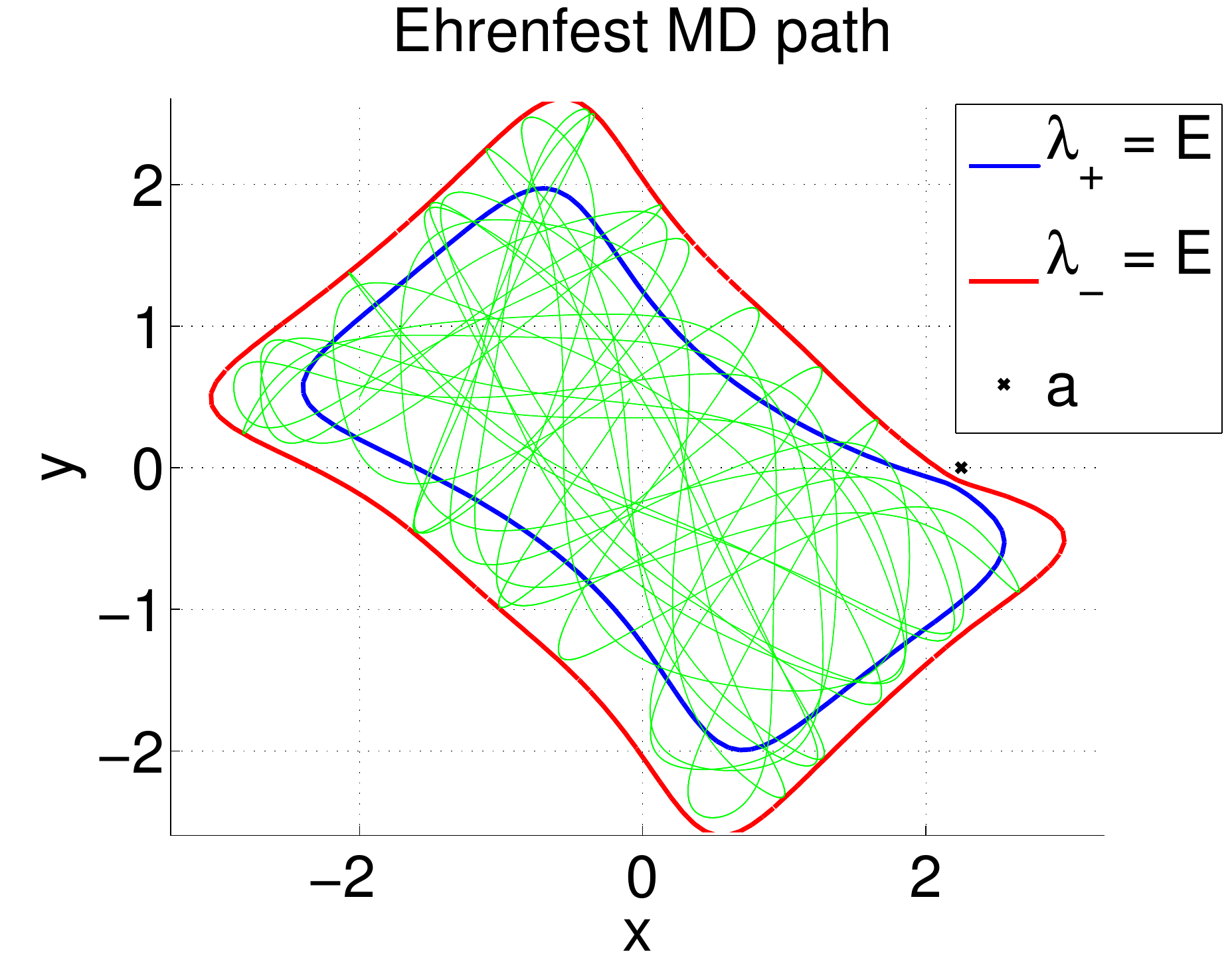}
\includegraphics[width=0.45\textwidth]{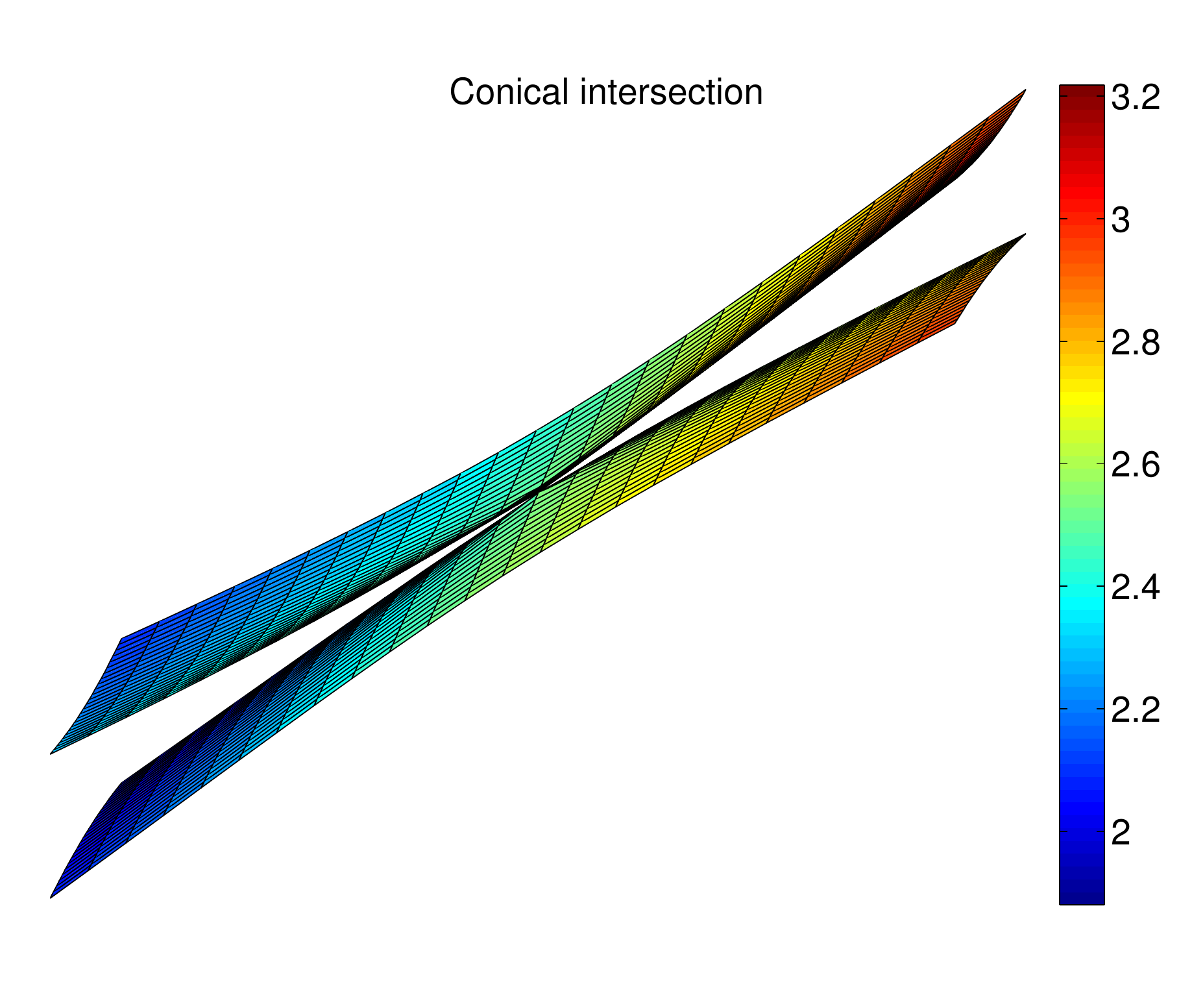}
\caption{(Left) Plots showing the molecular dynamics path and eigenvalue level curves at energy $E=2$. The area inside  the eigenvalue level
curve for $\lambda_0=E$ defines the classically allowed region. A small cross sign shows 
the conical intersection point $a = (2.25,0),$ which is near the boundary, but outside, of the classically allowed region. 
(Right) Potential surfaces near the conical intersection. The bottom and top surfaces
correspond to the ground state and excited state potential surfaces, respectively. 
The colors are mapped with the numerical values of the eigenvalues. 
%Note that no classical path enters
%the region where the eigenvalues are larger than the energy, $E$.
}\label{fig:eig-levels}
\end{figure}

\begin{figure}[h!]
\centering
\includegraphics[width=0.45\textwidth]{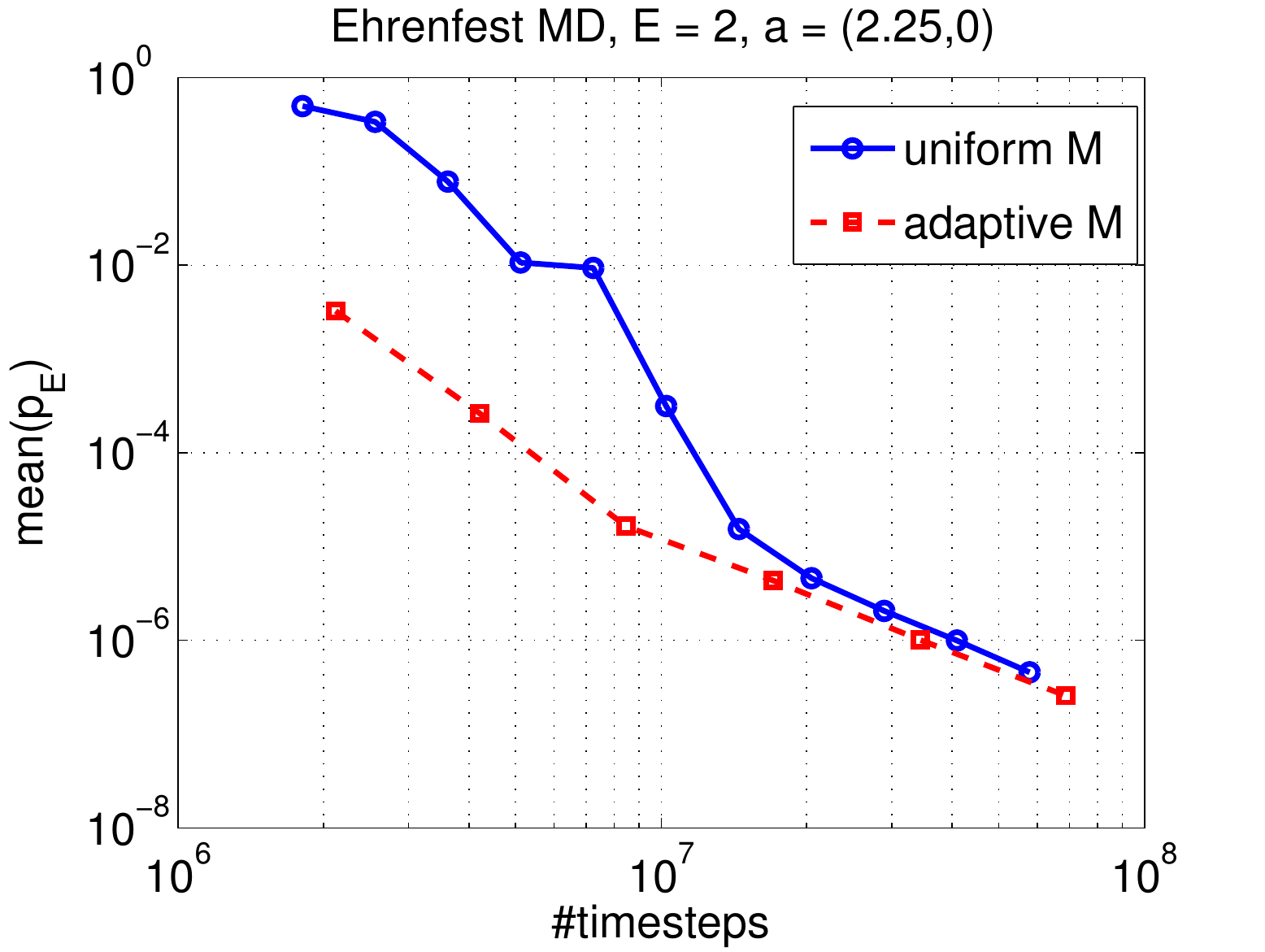} \\
\includegraphics[width=0.45\textwidth]{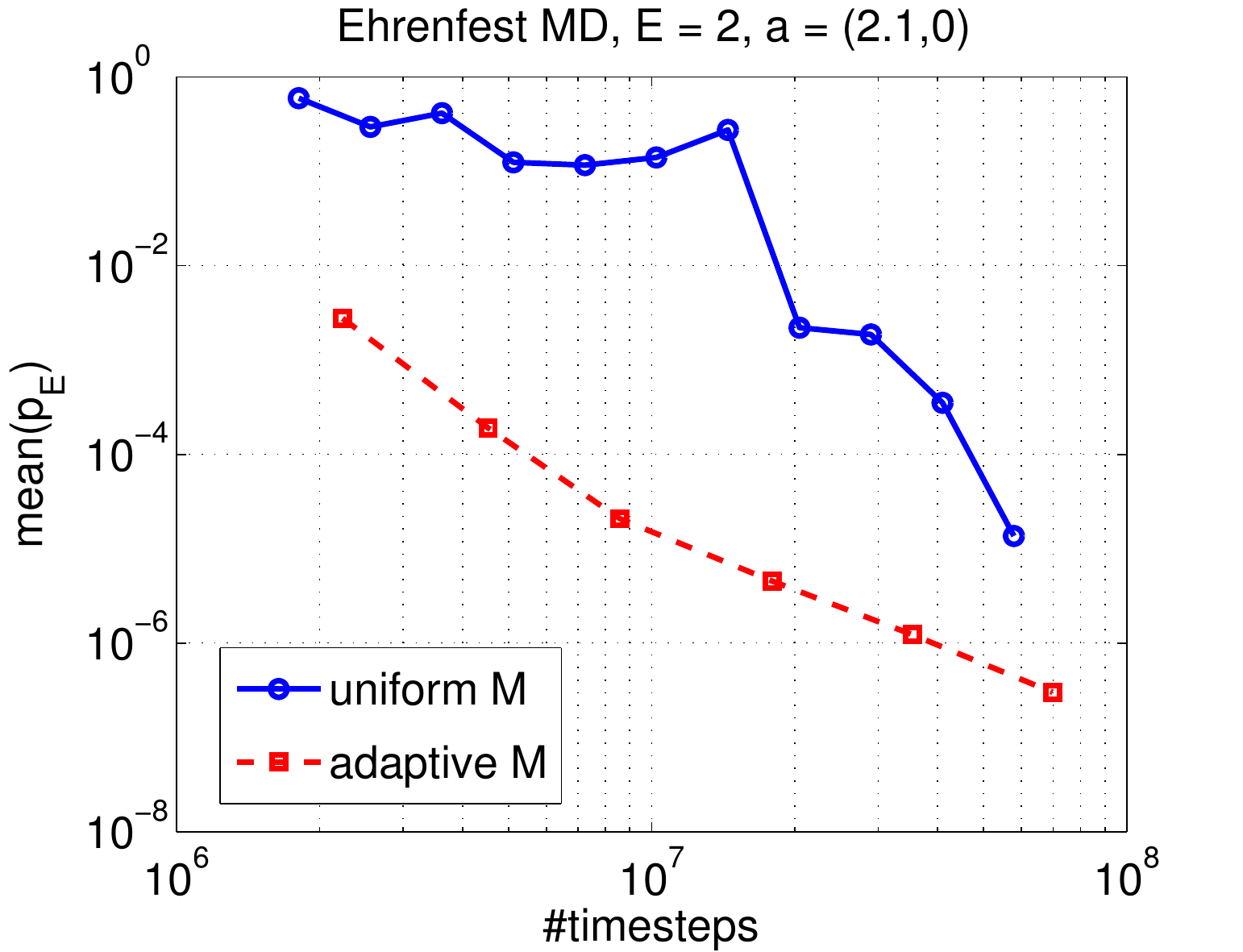} \\
\includegraphics[width=0.45\textwidth]{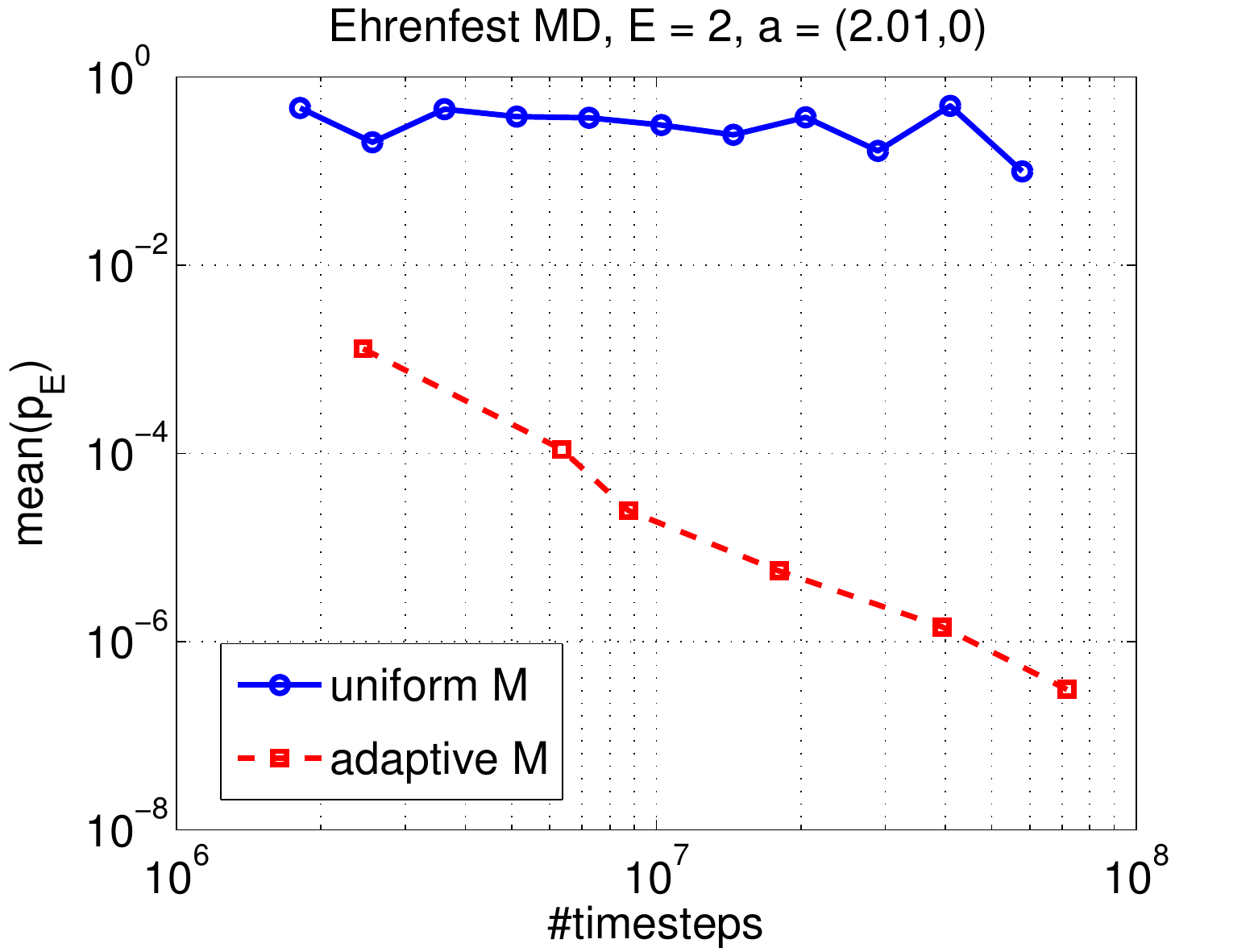}
\caption{Plots showing the arithmetic mean of the excitation probabilities, $p_E$, as a function of computational work.
As the spectral gap decreases from the top figure to the bottom, the computational work increases fast with uniform mass and slowly with adaptive mass.
%  for variable  gap between the ground state and excited state eigenvalues close to the near avoided conical intersection.
The simulations are computed with time $t\in [0,2000]$,  for the 2D-problem \eqref{eq:2d-potential}.}\label{fig:mean-pE}
\end{figure}

\begin{figure}[h!]
\centering
\includegraphics[width=0.45\textwidth]{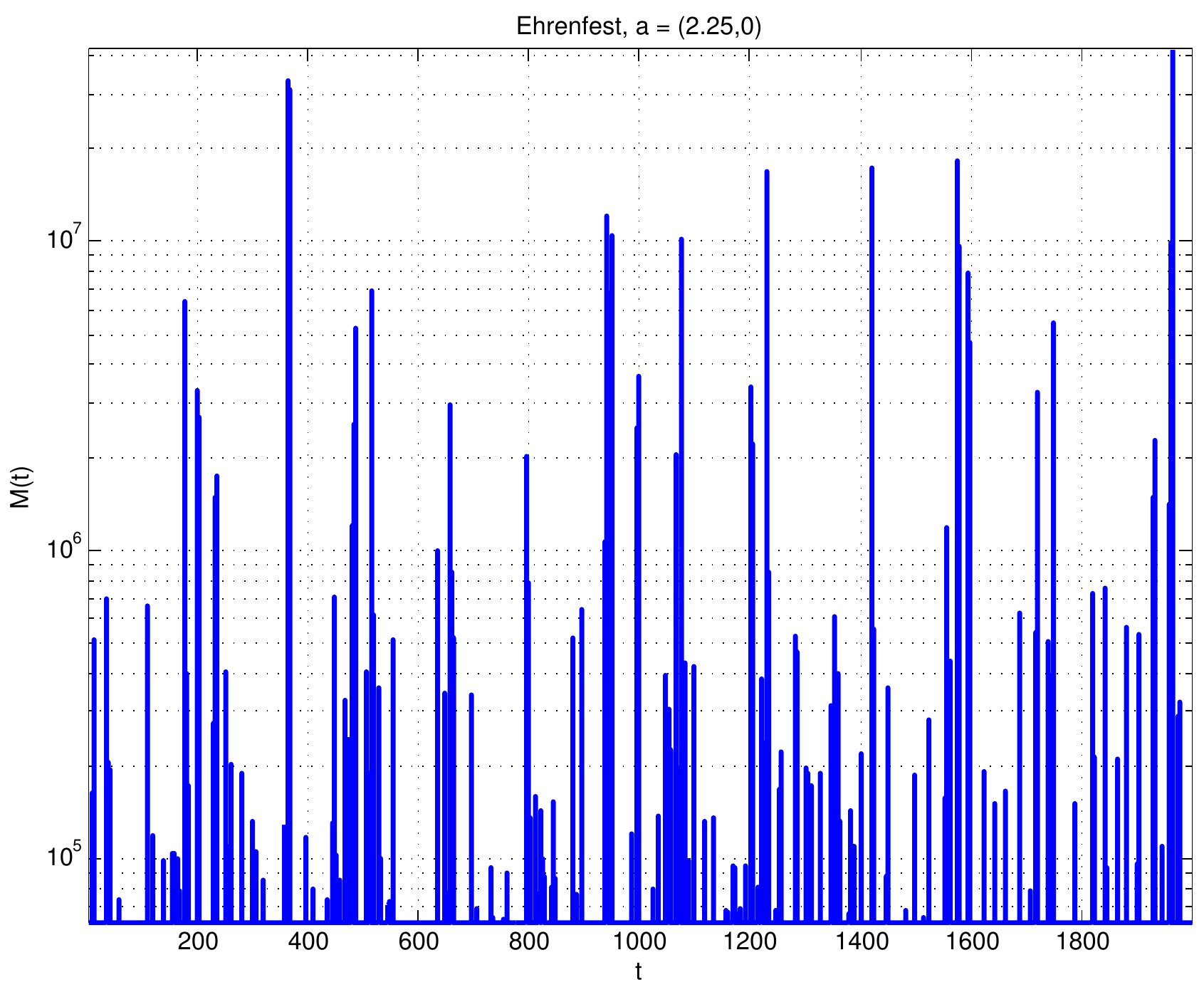}
\caption{Adaptive mass ratio distribution $M_E$ for Ehrenfest molecular dynamics ($\epsilon = 0.004$) 
as a function of time for the 2D-problem \eqref{eq:2d-potential}.}\label{fig:M-adaptive}
\end{figure}

The excitation probability, $p_E$, is defined by
\[%\begin{equation}
p_E(t) = \frac{|\langle \psi_t, \Psi_+(X_t) \rangle|^2}{\langle \psi_t, \psi_t \rangle \langle \Psi_+(X_t), \Psi_+(X_t) \rangle} \, .
\]%\end{equation}

Figure~\ref{fig:mean-pE} shows the main results related to the idea of using adaptive mass in Ehrenfest
molecular dynamics simulations:
%
%Figure~\ref{fig:mean-pE} 
the figure compares the computational work required 
for different excitation probabilities $p_E$. Here, we are interested in solving a ground state problem, and in the ideal case $p_E$ 
is  small. A larger value of $p_E$ will indicate larger probability that the molecular
system shifts from the ground state to an excited state. 
As the spectral gap becomes smaller the number of time steps with uniform mass increases fast while the number of time steps with adaptive mass ratio increases only slowly.
%As we see in Figure~\ref{fig:mean-pE},
When the computational work is  large the arithmetic mean of $p_E$ comes rather close for the adaptive and uniform mass cases. This computational result indicates that for sufficiently large mass
both uniform and adaptive mass based methods give similar mean value for $p_E,$ and, for a smaller 
mass, adaptive mass works better. In the case when the computational cost is an important
factor, the adaptive mass is therefore an attractive alternative since the number of time-steps for a certain
accuracy is smaller, especially for small spectral gaps. Figure~\ref{fig:M-adaptive} shows an example of the the distributions 
of adaptive mass, $M_E(t)$, as a function of time, $t$, where the size of the mass ratio varies many orders of magnitude.

The value of the Born-Oppenheimer molecular dynamics observable, $g_{MD}$, in the ergodic case is given by 
\begin{equation*}
  g_{MD} := \lim_{\delta \rightarrow 0+} \frac{\int_{E<H_0(X,P)<E+\delta} g(X,P)dXdP}{\int_{E<H_0(X,P)<E+\delta} dXdP}\COMMA
\end{equation*}
cf. \cite{reed_simon}, which for the two dimensional case, is given by 
\[%\begin{equation}\label{eq:g-MD}
  g_{MD} = \frac{\int_{H_0(X,0)\le E} g(X)dX}{\int_{H_0(X,0)\le E} dX}\COMMA
\]%\end{equation}
when $g$ depends only on $X$. We compute a reference value for $g_{MD}$ using Monte-Carlo integration
and compare it to Ehrenfest dynamics approximations.
The Ehrenfest dynamics observables
are approximated by the time average $\int_0^T g(X_t)\frac{dt}{T}$, where $X_t$ is the Ehrenfest dynamics position path.
Figure~\ref{fig:obs-adaptive} shows $\mathcal O(T^{-1/2})$ convergence rate of the Ehrenfest dynamics 
observables towards $g_{MD}$ for the adaptive mass cases applied to the two dimensional problem~\ref{sec:2d-prob}.

\begin{figure}[h!]
\centering
\includegraphics[width=0.45\textwidth]{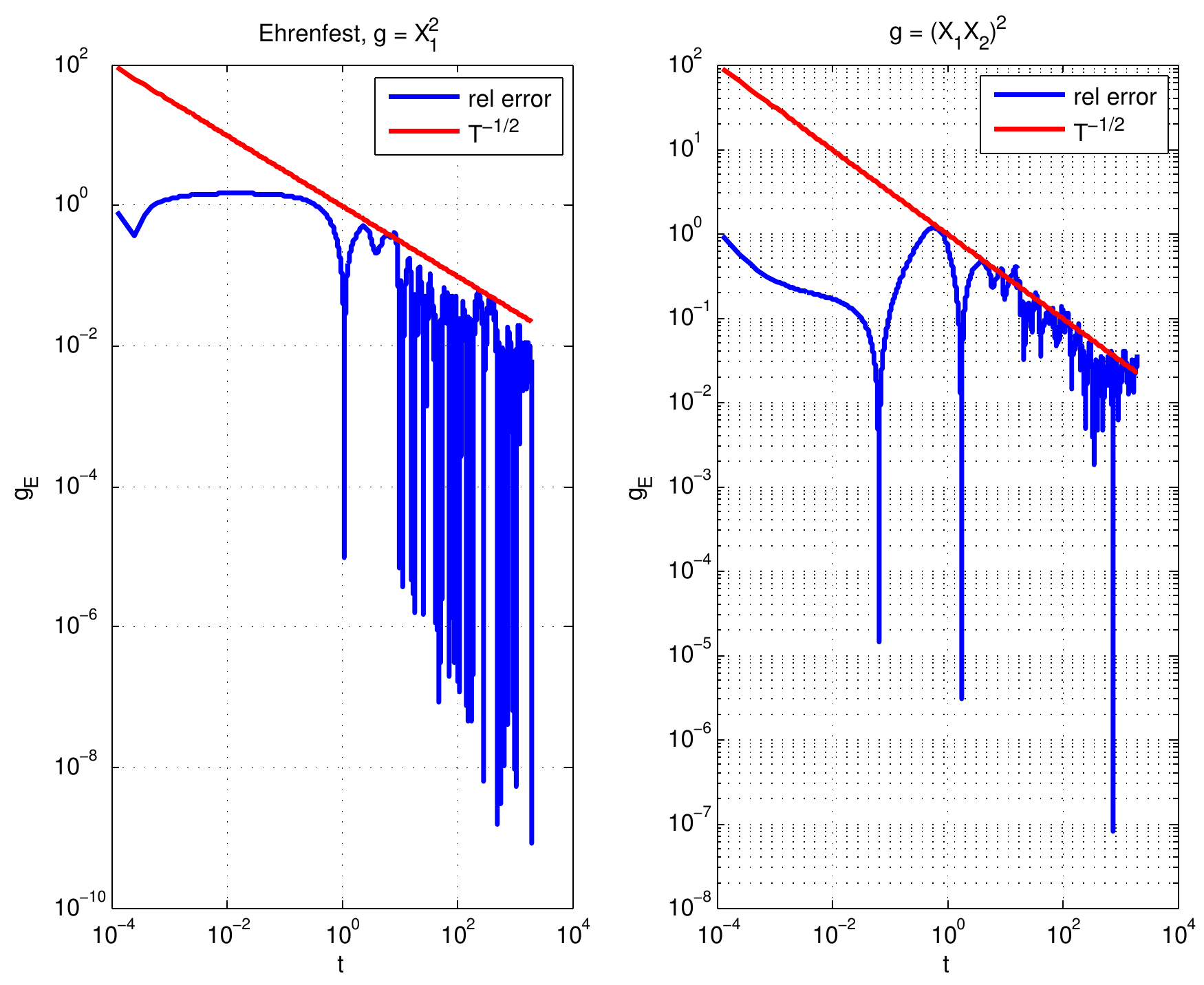}
\caption{Convergence of Ehrenfest molecular dynamics observable to 
molecular dynamics observables for adaptive mass case ($\epsilon = 0.004$) and  the 2D-problem \eqref{eq:2d-potential}.}\label{fig:obs-adaptive}
\end{figure}

\begin{figure}[h!]
\centering
\includegraphics[width=0.45\textwidth]{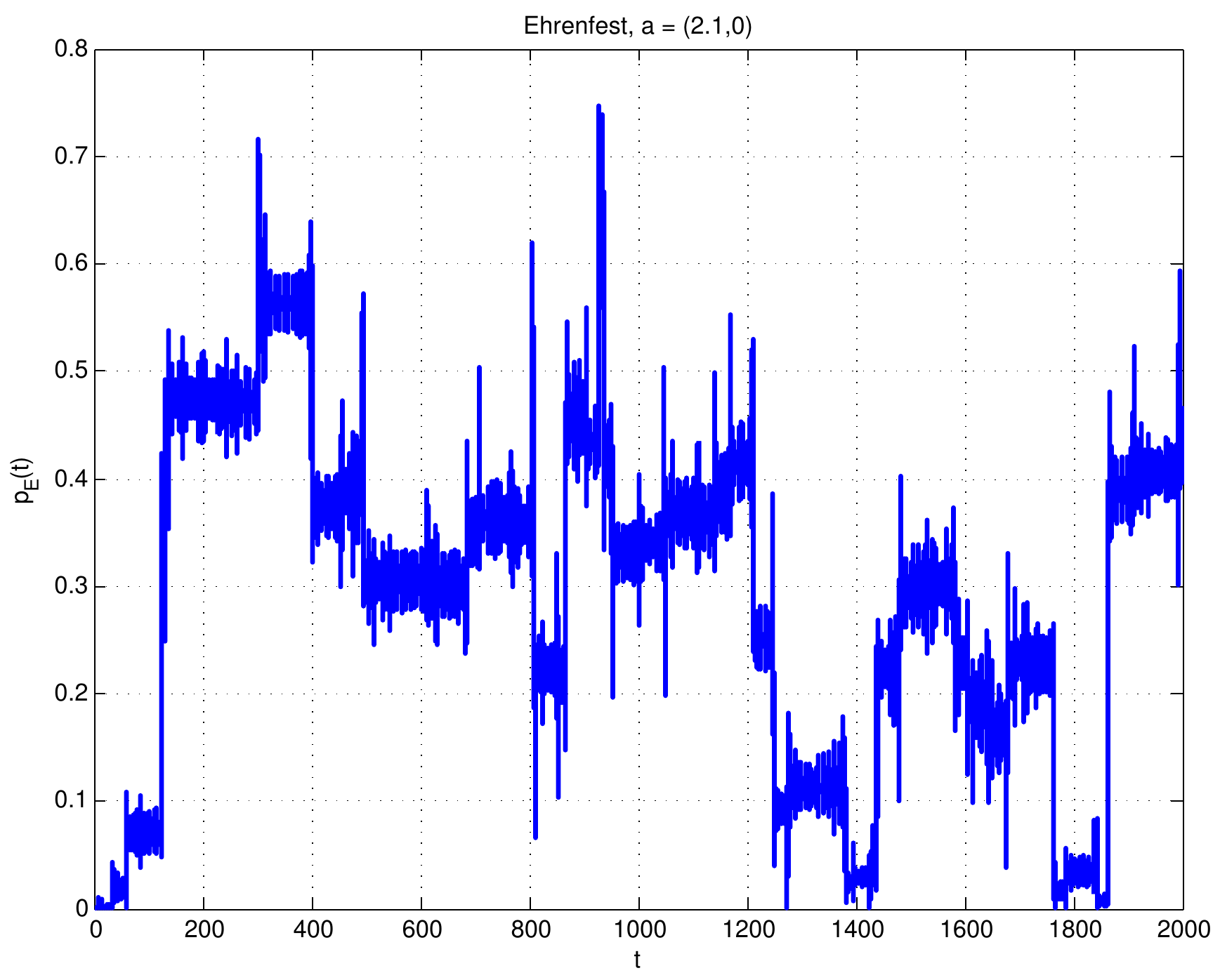}
\includegraphics[width=0.45\textwidth]{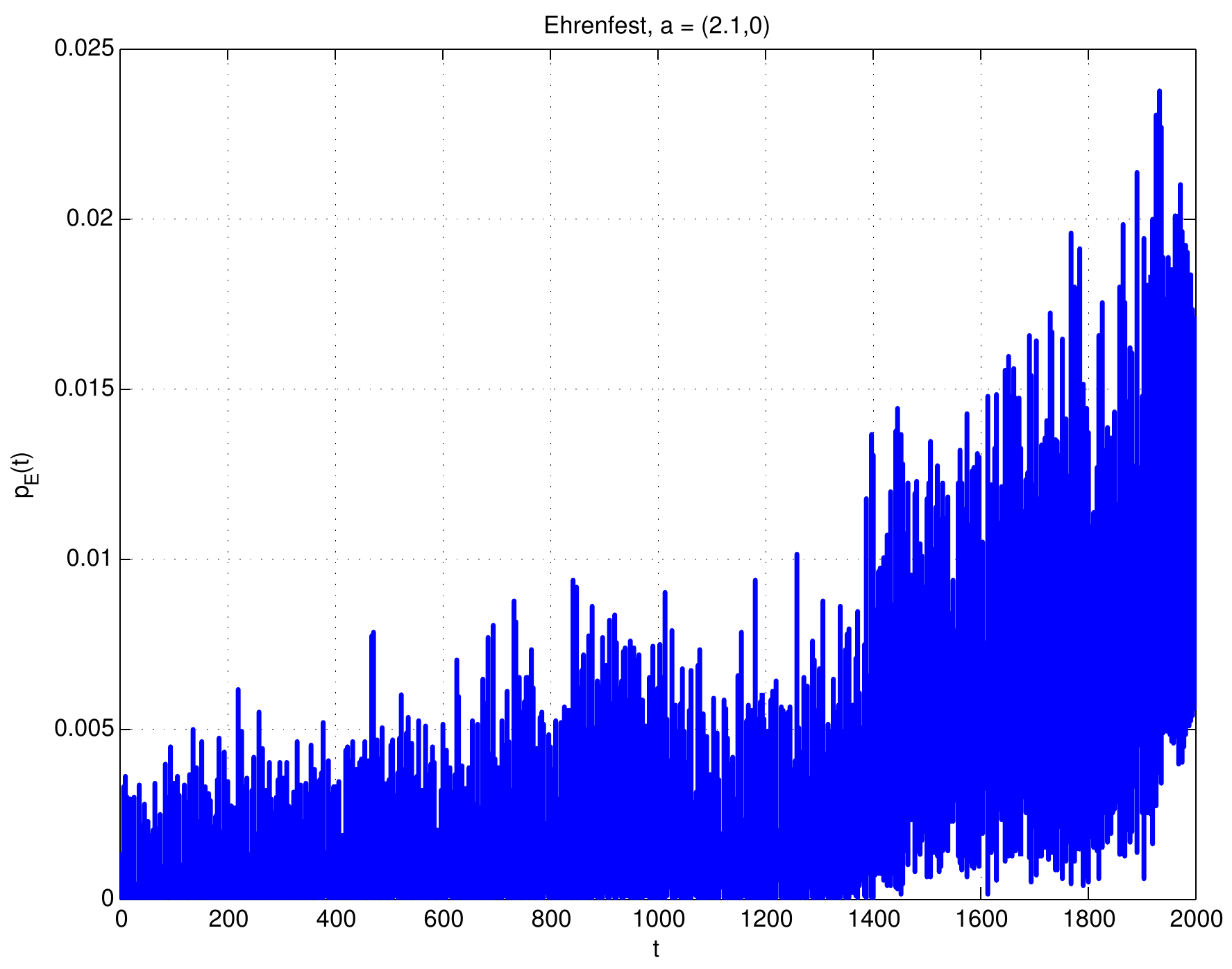} \\
\includegraphics[width=0.45\textwidth]{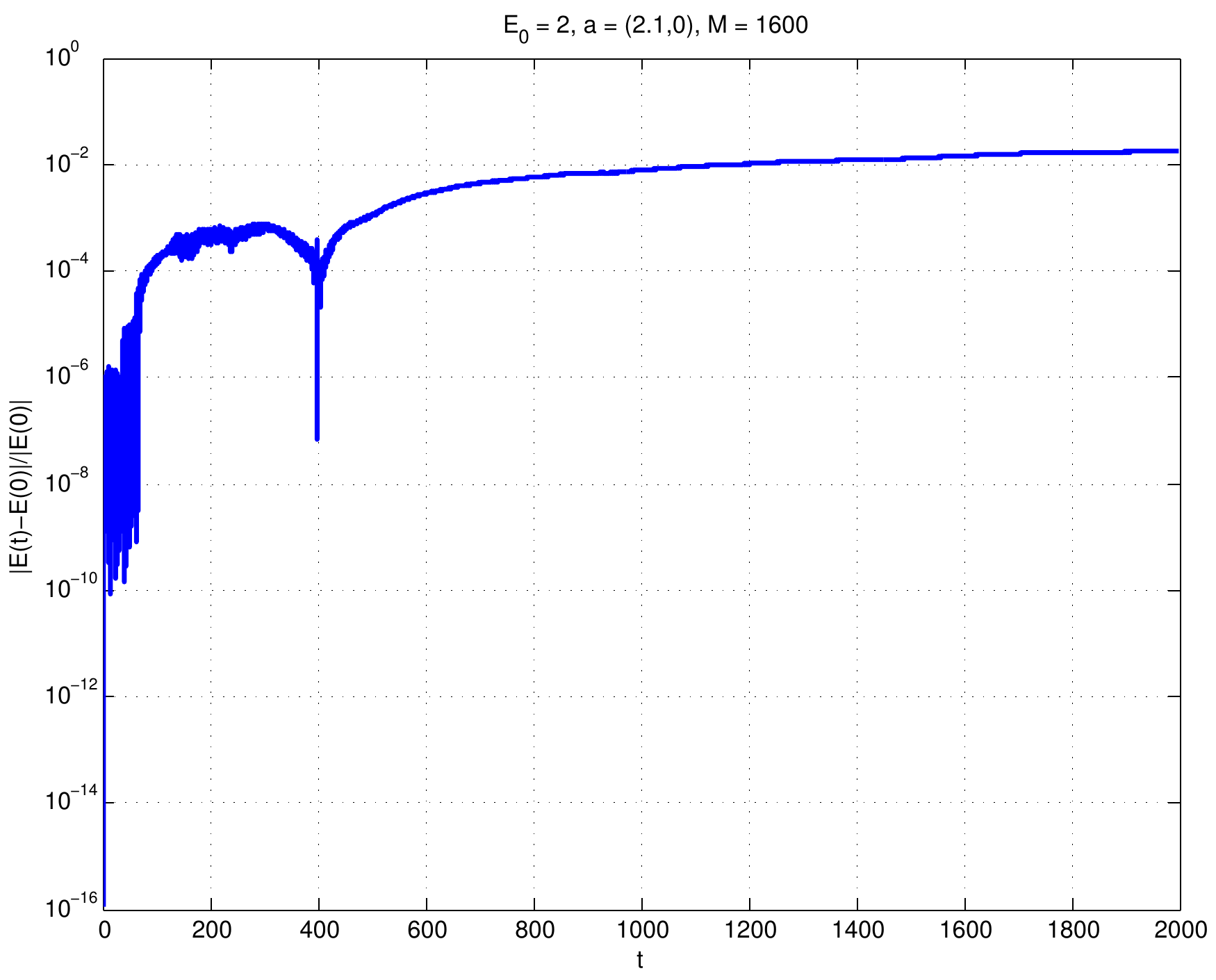}
\includegraphics[width=0.45\textwidth]{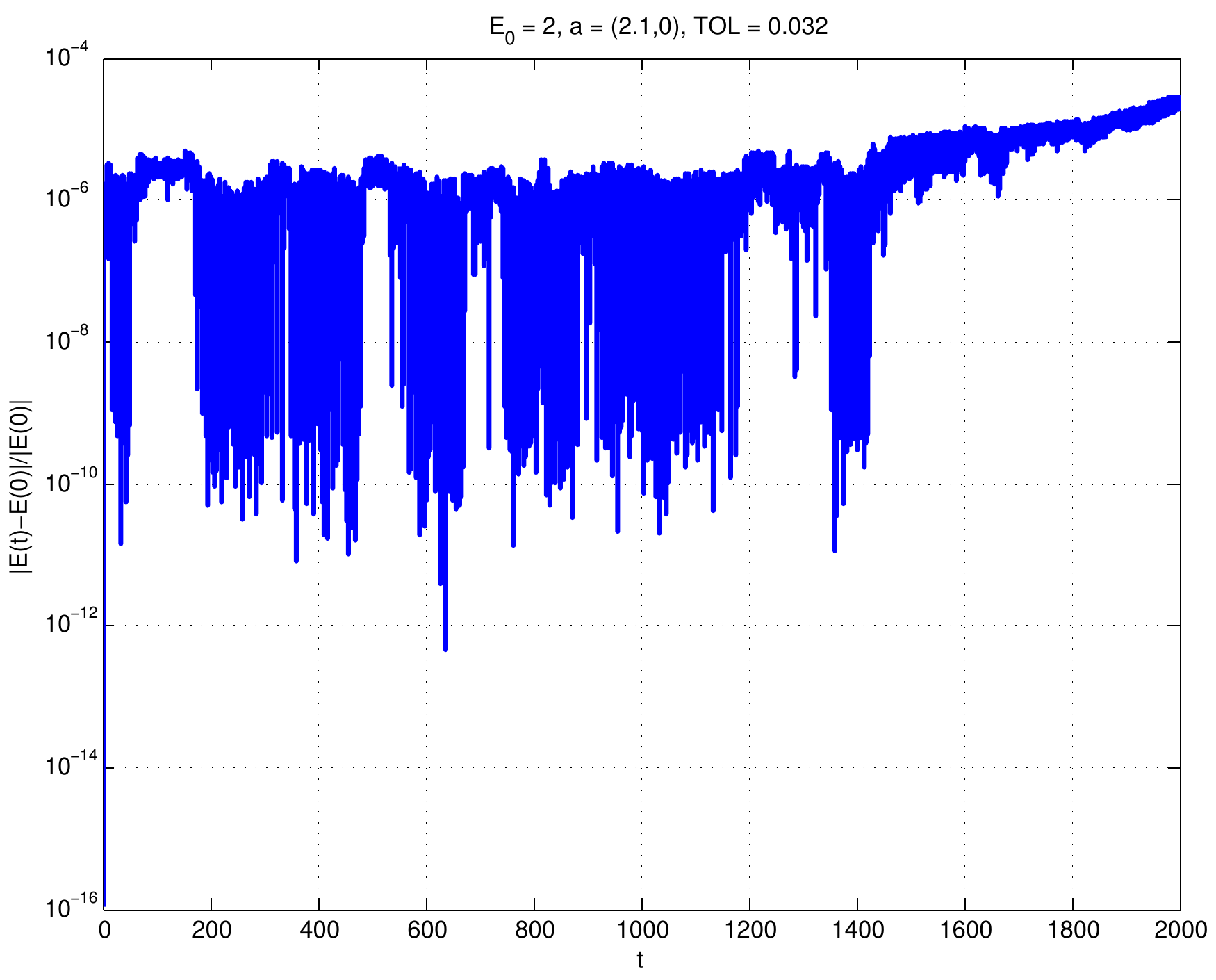} \\
\includegraphics[width=0.45\textwidth]{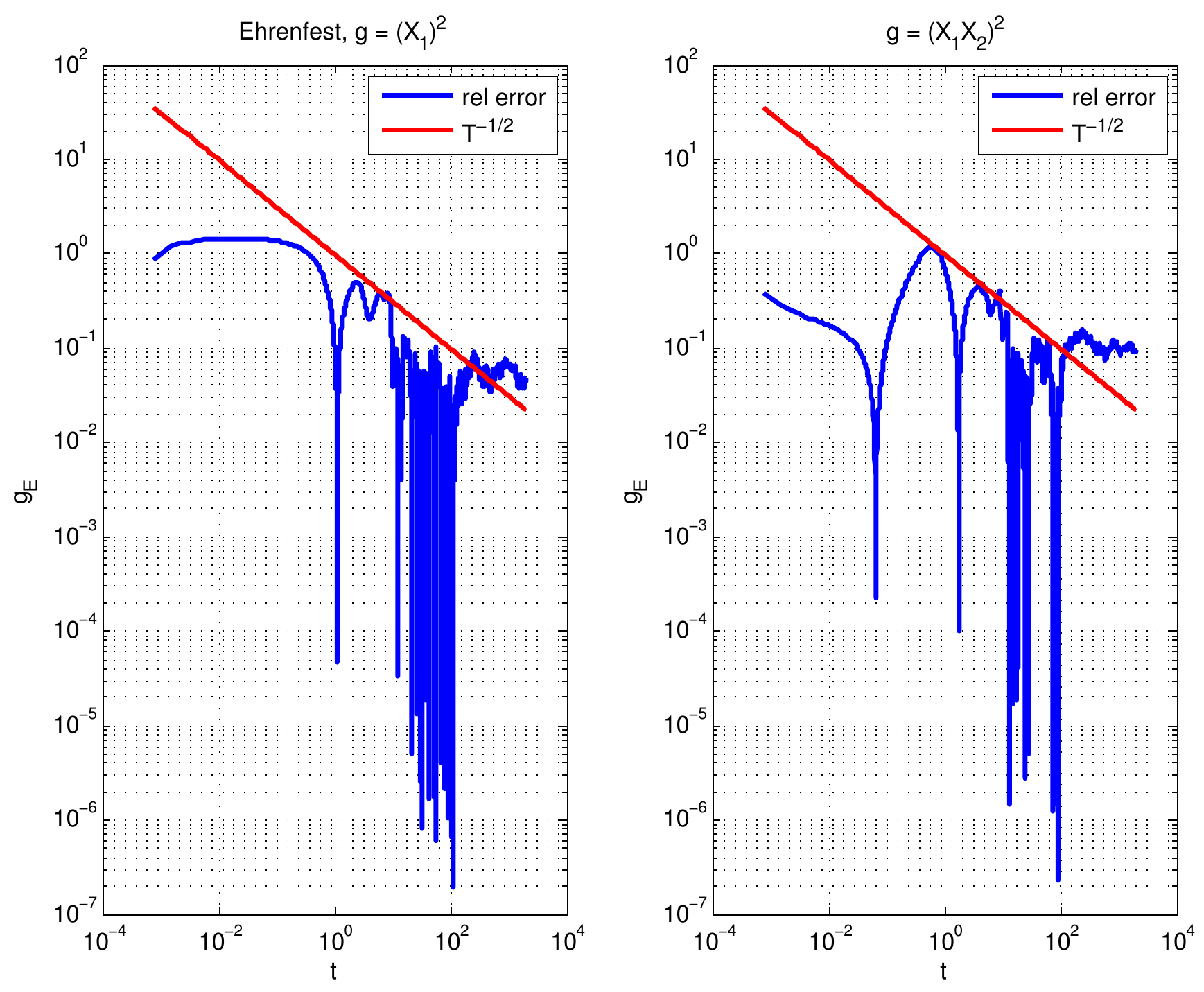}
\includegraphics[width=0.45\textwidth]{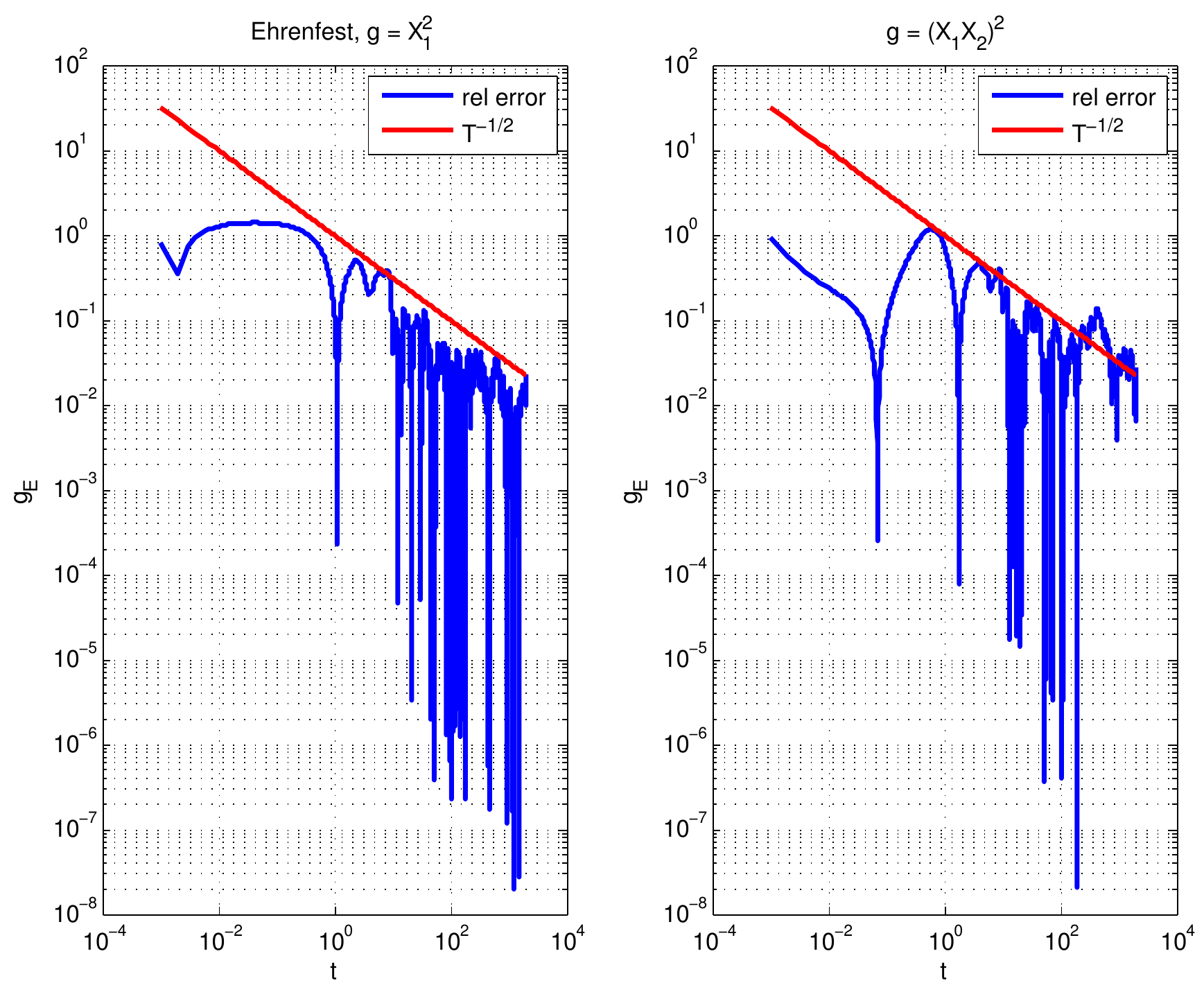}
\caption{Solution plots for uniform (left column) and
adaptive mass (right column) cases for smaller mass parameters. The conical intersection is at $a = (2.1,0)$,
the mass parameter is $M=1600$ for the uniform mass case, and the parameter $\epsilon = 0.032$ for the adaptive
mass case. The required number of time-steps for the time interval $[0,2000]$ is $2.5\times 10^6$ for the uniform mass case,
and $2.2\times 10^6$ %($2237128$ to be exact) 
for the adaptive mass case in the 2D-problem \eqref{eq:2d-potential}.
The first row shows that the excitation probabilities, $p_E$, are much smaller when adaptive mass is used,
the second row shows that the energy is better preserved in the adaptive mass case, and the third row
%shows that the observable drifts for the uniform mass case while it converges better in the adaptive mass case.
shows that Ehrenfest observable approximation can not converge to the ergodic limit as
$T\rightarrow\infty$, because of the drift away from the ground state. Since the adaptive mass 
algorithm approximately preserves the ground state better than uniform mass, for finite time
intervals, it can also approximate observables better.
}\label{fig:small-mass-sol}
\end{figure}

This numerical example illustrates  the efficiency obtained by determining 
the artificial mass parameter adaptively. The systematic
choice of the mass parameter also reduces the experimental complexity.
%%%%%%%%%%%%%%%%%%%%%%%%%%%%%%%%%%%%%%%%%%%%%%%%%%%%%%%%%%%%%%%%%%%%%%%%%%%%%%%%
\subsection{An example with smaller spectral gap}
We consider the case with $a = (2.1,0)$ closer to the classically allowed region,  in which the excitation probabilities, $p_E$,
are larger than the case with $a = (2.25,0)$ considered in Section \ref{sec:2d-prob}. 
%As we see in Figure~\ref{fig:small-mass-sol}, 
%the adaptive mass is clearly  more effective for larger excitation probabilities and 
%smaller mass parameters. 
Figure~\ref{fig:small-mass-sol} compares uniform mass ratio to adaptive mass ratio regarding excitation probability, energy conservation and accuracy of observables.
In this example, we choose 
the mass $M=1600$ for the uniform mass case, and the parameter $\epsilon = 0.032$ for the adaptive
mass case. The number of time-steps required for the uniform mass case is $2.5\times 10^6$ and for the
adaptive mass case it is $2.2\times 10^6$; %($2237128$ to be exact), 
consequently the work in the
two settings are approximately the same 
while the adaptive mass yields a more accurate solution with smaller excitation probability,
better energy conservation, and higher accuracy of observables.
%%%%%%%%%%%%%%%%%%%%%%%%%%%%%%%%%%%%%%%%%%%%%%%%%%%%%%%%%%%%%%%%%%%%%%%%%%%%%%%%
%%%%%%%%%%%%%%%%%%%%%%%%%%%%%%%%%%%%%%%%%%%%%%%%%%%%%%%%%%%%%%%%%%%%%%%%%%%%%%%%
%%%%%%%%%%%%%%%%%%%%%%%%%%%%%%%%%%%%%%%%%%%%%%%%%%%%%%%%%%%%%%%%%%%%%%%%%%%%%%%%
%%%%%%%%%%%%%%%%%%%%%%%%%%%%%%%%%%%%%%%%%%%%%%%%%%%%%%%%%%%%%%%%%%%%%%%%%%%%%%%%
\subsection{A one dimensional problem}
We consider the one dimensional, time-independent Schr\"odinger equation~\eqref{eq:schrod}
with the heavy-particle coordinate~$X \in \mathbb{R}$, the two-state light-particle 
coordinate~$x \in \{x_-,x_+\}$, and the potential matrix
\begin{equation}
\VOPER(X) := \left[\begin{array}{cc}
                               X\cos 2X+\delta\sin 2X-1+r(X) & -X\sin 2X+\delta\cos 2X \\
                               -X\sin 2X+\delta\cos 2X & -(X\cos 2X+\delta\sin 2X)-1+r(X)
                        \end{array}\right], \label{eq:potential-1d}
\end{equation}
where the parameter $\delta$ is a non-negative constant, and the function~$r:\mathbb R \rightarrow \mathbb R$ is given by
$$r(X) := \left\{\begin{array}{ll}
                                (X+2)^2 & \mbox{if $X<-2$}\COMMA \\
                               (X-2)^2 & \mbox{if $X>2$}\COMMA \\
			0 & \mbox{otherwise}.
                        \end{array}\right.$$
%For each $X$, we have the eigenvalue problem $\VOPER(X)\Psi_\pm(X) = \lambda_\pm(X)\Psi_\pm(X)$,  
The eigenvalues of $V$ are given by $\lambda_\pm(X) = r(X)-1 \pm \sqrt{X^2+\delta^2}$, which gives a minimum 
distance between $\lambda_-$ and $\lambda_+$ of size $2\delta$ at $X=0$.
%We denote $\Psi_-(X)$ as the ground state eigenvector and $\Psi_+(X)$ as the excited state eigenvector.
We choose the energy $E=1$. We note that a larger value of $\delta$ will give a smaller excitation probability,
whereas a smaller value of $\delta$ will give a larger excitation probability.

We use the RATTLE alorithm~\cite{shake_rattle} for approximation of the Car-Parrinello molecular dynamics~\eqref{eq:cp} simulations, and the
St\"ormer-Verlet method~\cite{verlet-method} for the Ehrenfest molecular dynamics~\eqref{eq:mod-ehrenfest} simulations.
We choose the initial data $X_0=-4$,
$P_0 = \sqrt{2(E-\langle\psi_0,V(X_0)\psi_0\rangle)}$, and $\psi_0$ to be the ground state eigenvector of $V(X_0)$.

\begin{figure}[h!]
\centering
\includegraphics[width=0.45\textwidth]{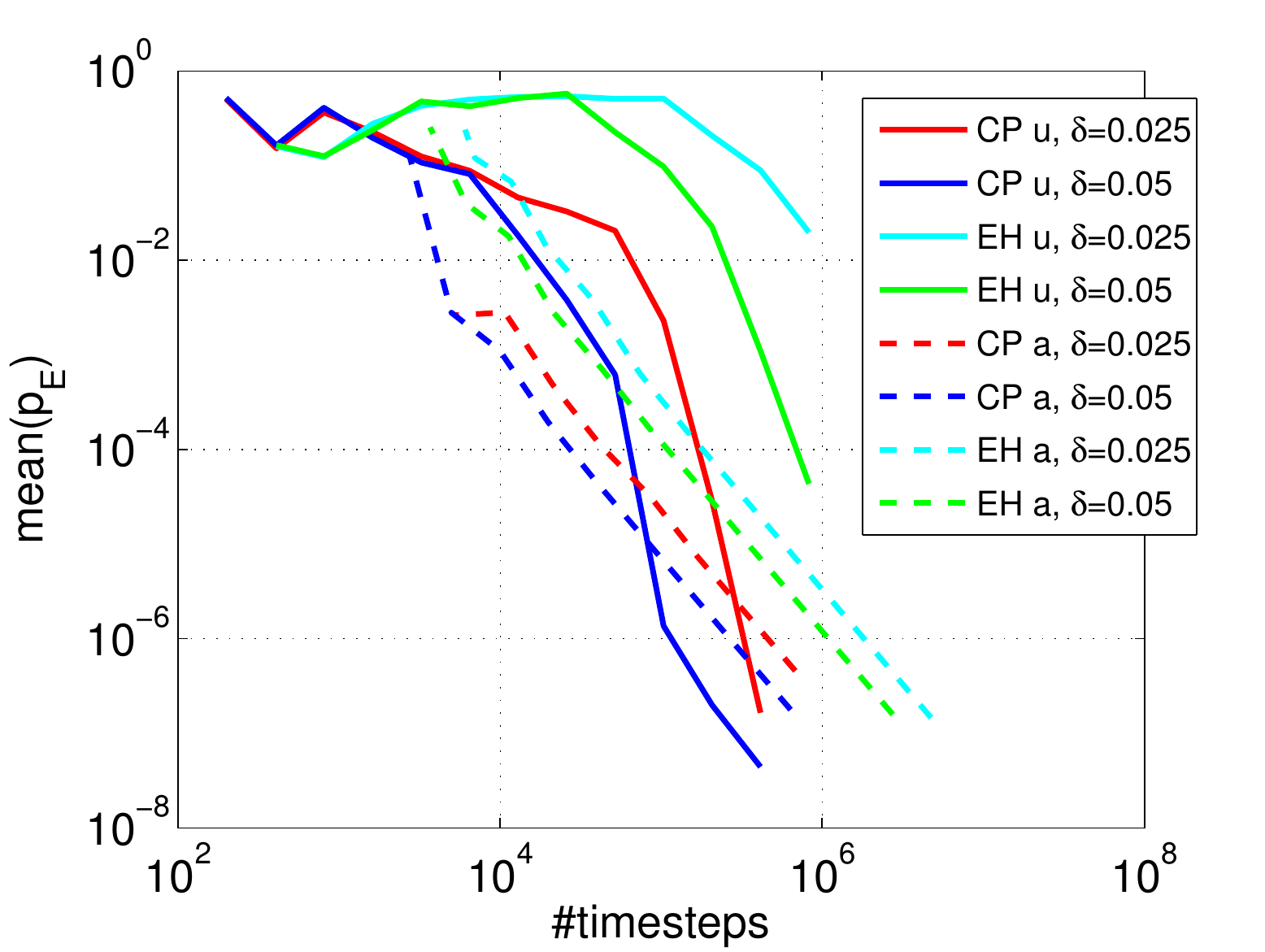}
\caption{
%Plot showing mean($p_E$) as a function of computational work (number of  time-steps)
%for different spectral gaps, $\delta$, at the near avoided conical intersection
%for Car-Parrinello (CP) and Ehrenfest (EH) molecular dynamics 
%with uniform (u) and adaptive (a) mass, with $\gamma_{CP}=2$, and
%$\gamma_E=4$ for the 1D-problem \eqref{eq:potential-1d}. We see that the Car-Parrinello molecular dynamics
%performed better than the Ehrenfest molecular dynamics.
Plot showing the arithmetic mean of the excitation probabilities, $p_E$, as a function of computational
work (number of time-steps) for different spectral gaps $\delta$. The Car-Parrinello (CP) and
Ehrenfest (EH) have parameter $\gamma_{CP}=2$ and $\gamma_E=4$ in equation \eqref{eq:potential-1d}. 
Solid curves correspond to uniform mass (u), and dotted lines to adaptive mass (a). We see that
the Car-Parrinello molecular dynamics performs better than the Ehrenfest molecular
dynamics, in the sense of computational work needed for a given accuracy.
}\label{fig:1D-delta-pE}
\end{figure}

\begin{figure}[h!]
\centering
\includegraphics[width=0.45\textwidth]{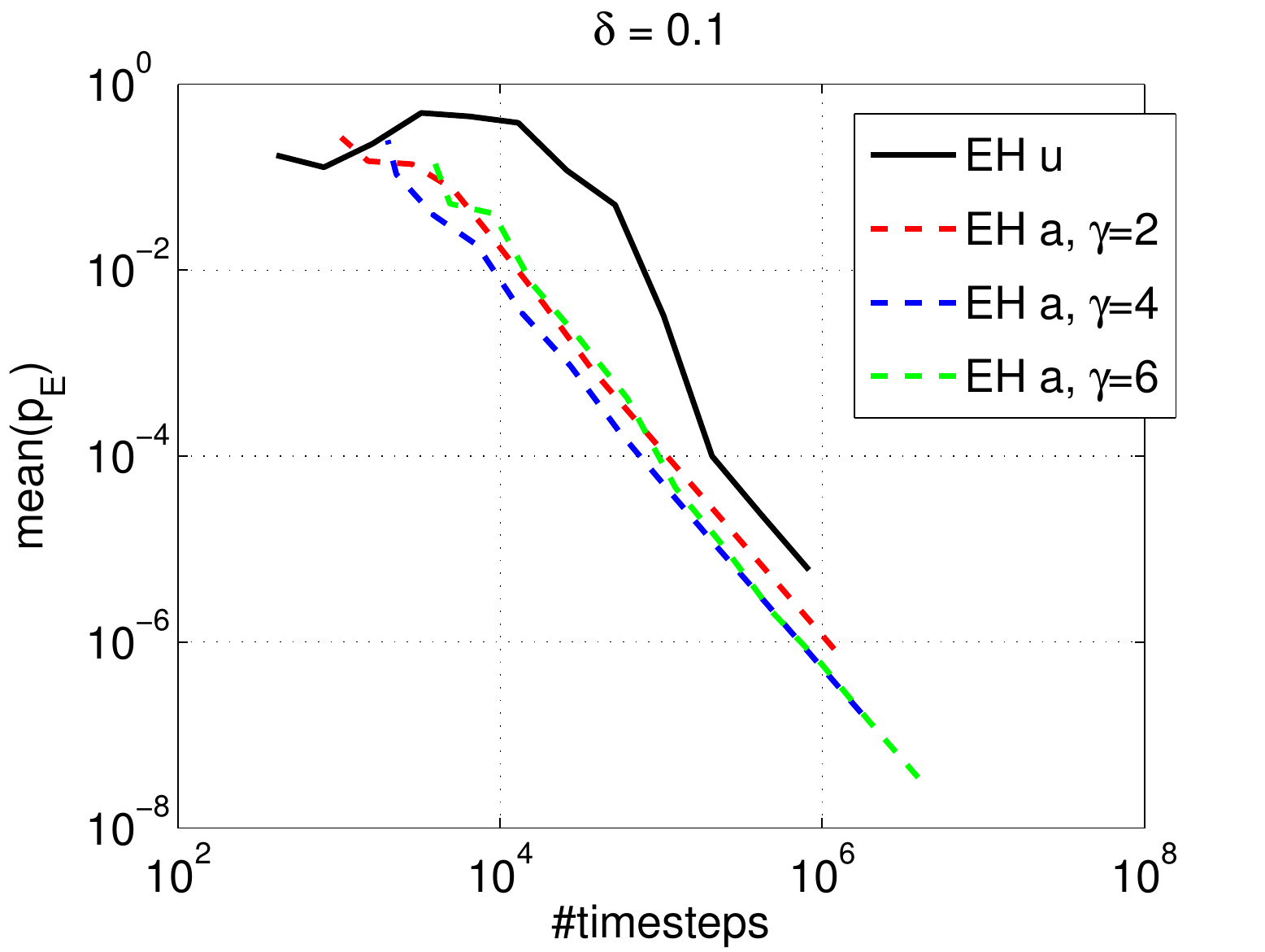}
\caption{Plot showing the arithmetic mean of the excitation probability, $p_E$, as a function of the number of
time-steps with different $\gamma_E$ for Ehrenfest molecular
dynamics in the 1D-case \eqref{eq:potential-1d}.}\label{fig:EH-variableExpn}
\end{figure}

\begin{figure}[h!]
\centering
\includegraphics[width=0.45\textwidth]{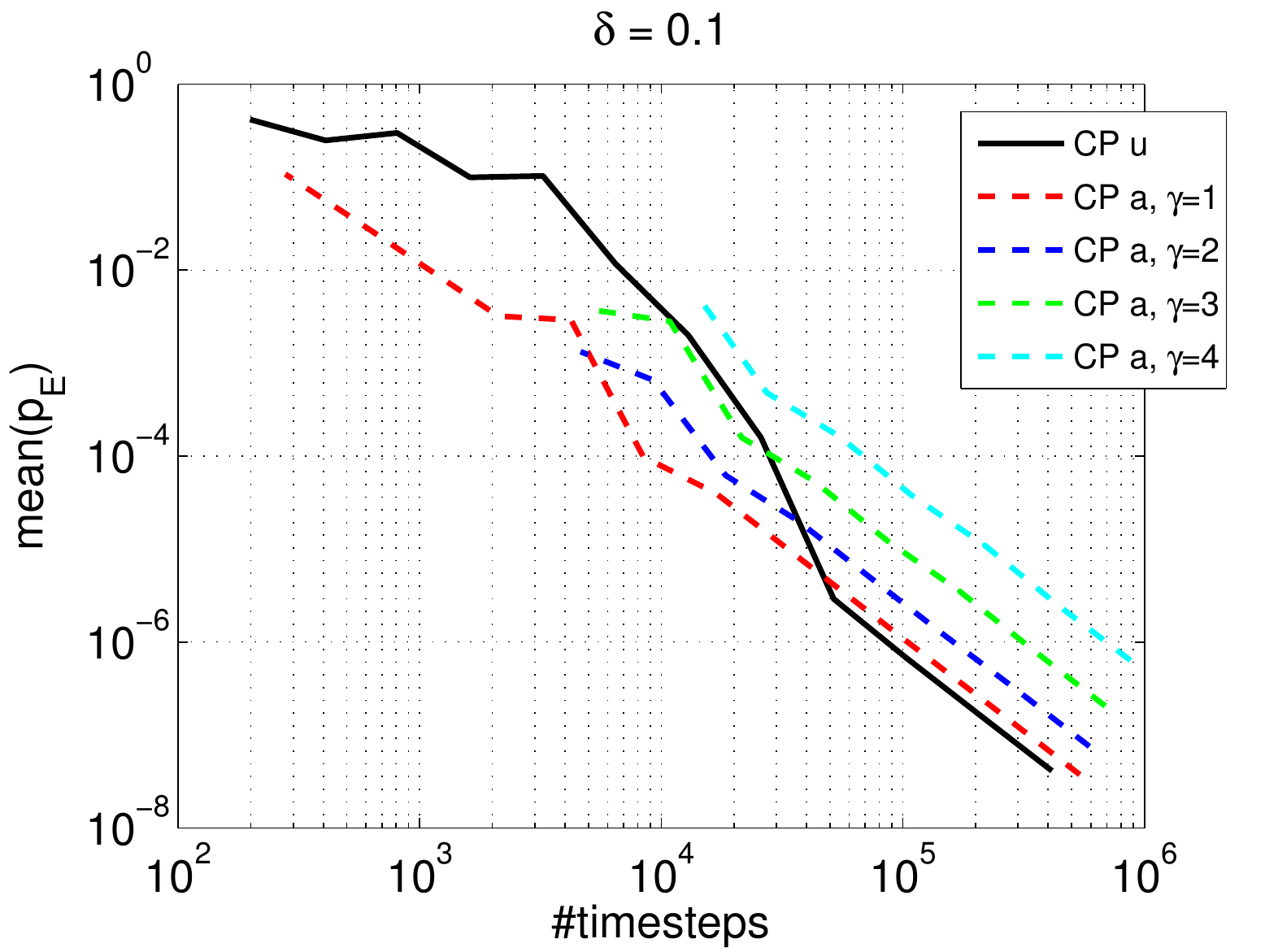}
\caption{Plot showing the arithmetic mean of the excitation probability, $p_E$, as a function of the number of 
time-steps with different $\gamma_{CP}$ for Car-Parrinello molecular
dynamics in the 1D-case \eqref{eq:potential-1d}.}\label{fig:CP-variableExpn}
\end{figure}

Figure~\ref{fig:1D-delta-pE} shows that Car-Parrinello molecular dynamics
performs better than Ehrenfest molecular dynamics, in the sense of computational
work needed to achieve a desired accuracy.
Observe that for very small excitation probability, using large mass and very 
many time-steps, the uniform case is somewhat more efficient, while the adaptive
method is more efficient for less computational work. % using time-steps.
Figure~\ref{fig:mean-pE}, for the two dimensional case, shows that adaptive Ehrenfest
method  is more efficient than uniform Ehrenfest method.
Figure~\ref{fig:EH-variableExpn} shows that for Ehrenfest dynamics the choice 
$\gamma_E=4$ yields better results in comparison to $\gamma_E=2$ or $\gamma_E=6$, and 
Figure~\ref{fig:CP-variableExpn} shows that for  Car-Parrinello dynamics
$\gamma_{CP}=1$ yields the best result.
%%%%%%%%%%%%%%%%%%%%%%%%%%%%%%%%%%%%%%%%%%%%%%%%%%%%%%%%%%%%%%%%%%%%%%%%%%%%%%%%
\begin{figure}[h!]
\centering
\includegraphics[width=0.45\textwidth]{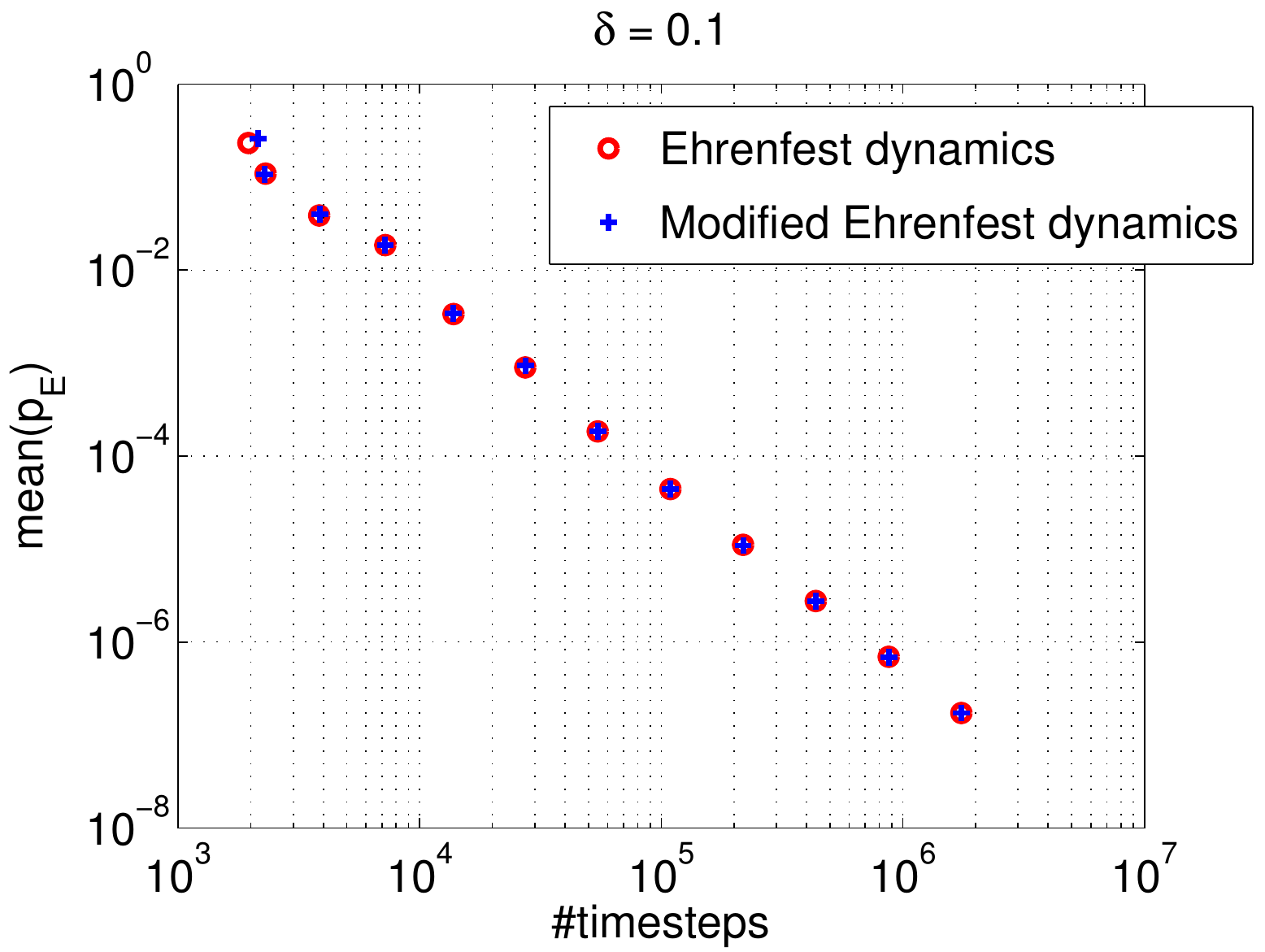}
\caption{Plot comparing the arithmetic  mean of the excitation probability, $p_E$, for
Ehrenfest dynamics~\eqref{eq:ehrenfest} and for the modified version~\eqref{eq:mod-ehrenfest},
as a function of the number of 
time-steps using the algorithm based on random perturbation \eqref{eq:adaptiveM-EH-random} in $\dot\psi$
with $\delta = 0.1,$ $t\in[0,8]$ and $\gamma_E = 4$ in the 1D-case \eqref{eq:potential-1d}.
}\label{fig:EH-randomperturb-meanpE}
\end{figure}
\begin{figure}[h!]
\centering
\includegraphics[width=0.45\textwidth]{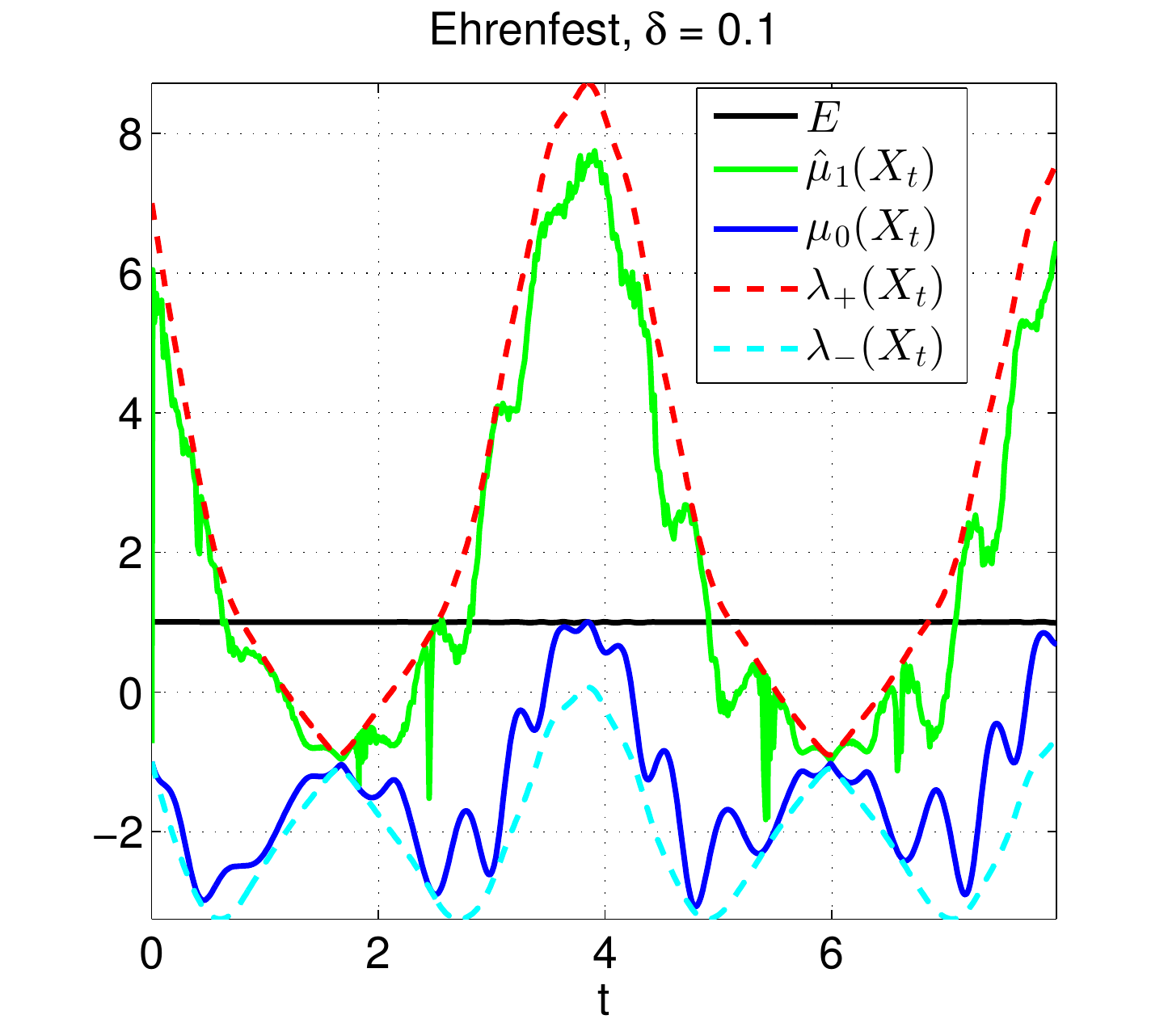}
\includegraphics[width=0.45\textwidth]{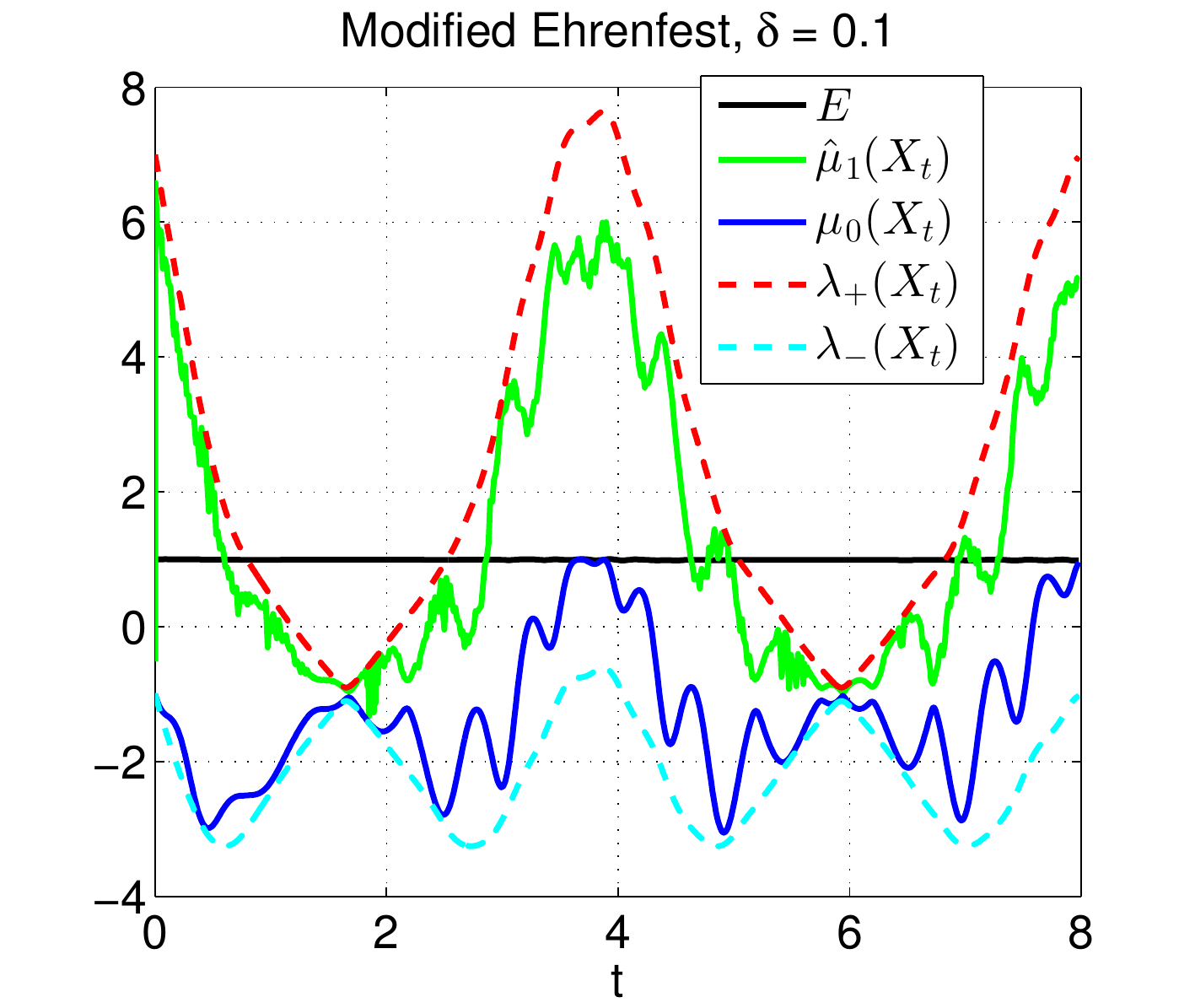} \\
\includegraphics[width=0.45\textwidth]{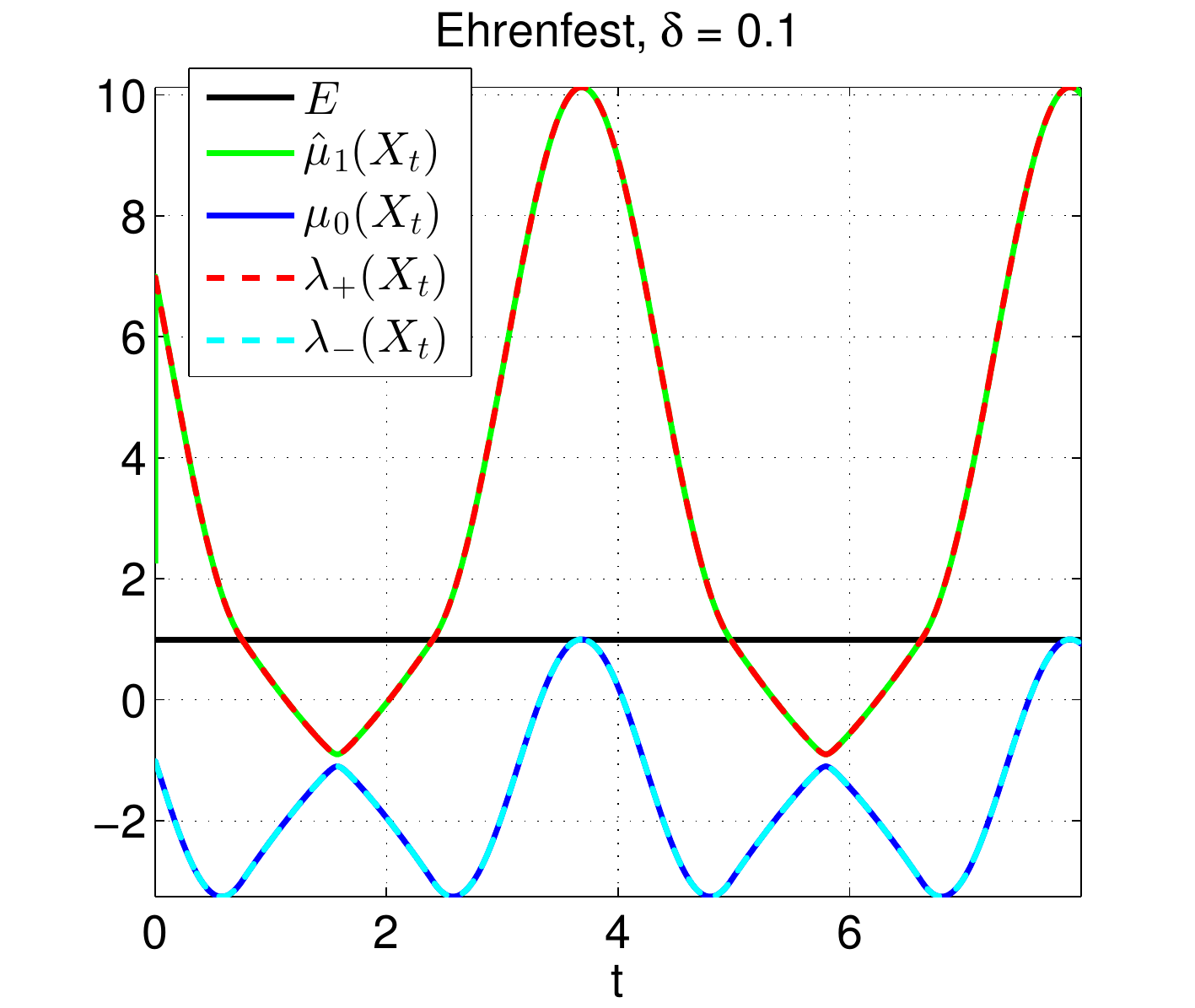}
\includegraphics[width=0.45\textwidth]{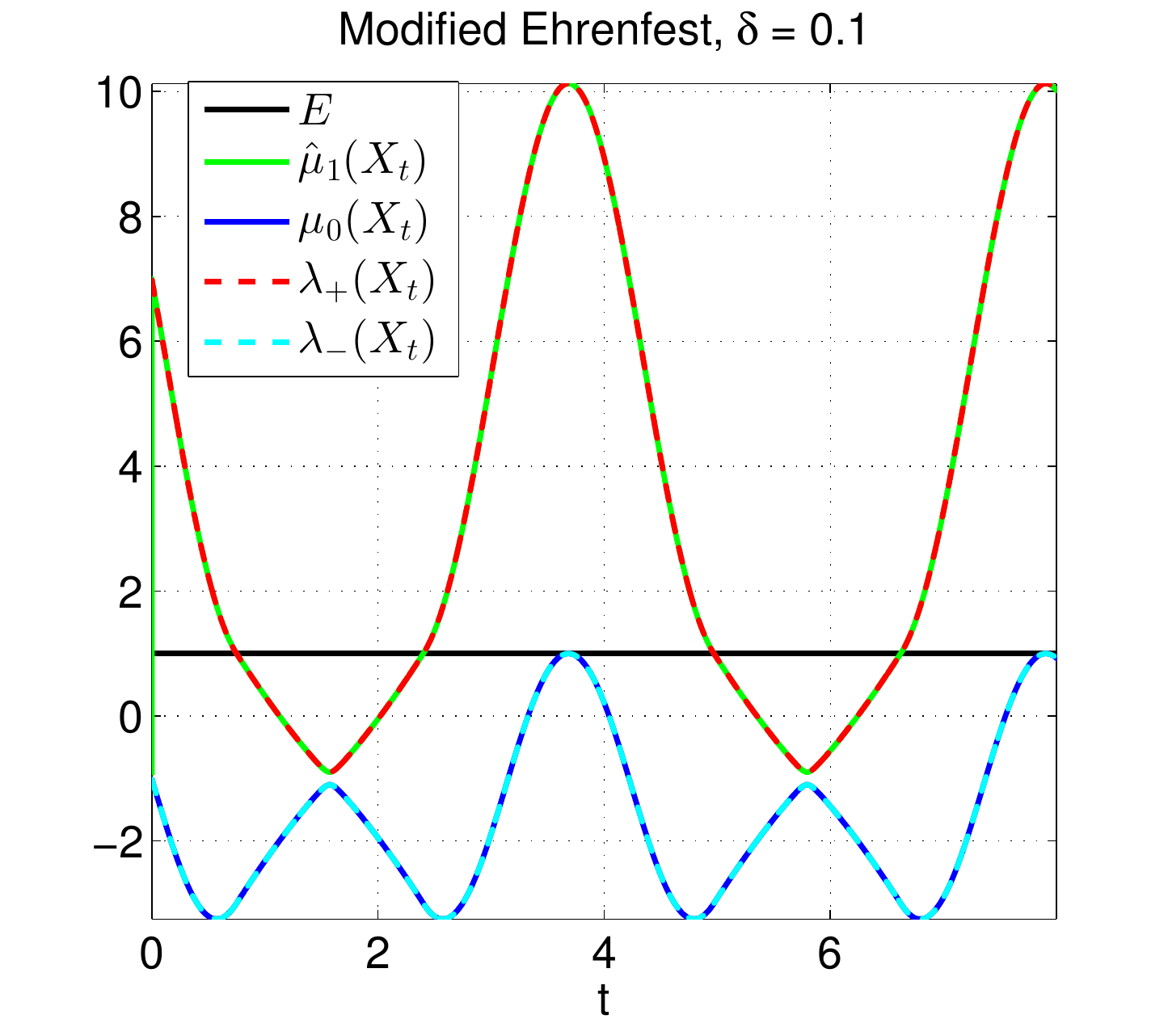}
\caption{Plots showing approximation of the eigenvalues for the algorithm based on random perturbation in $\dot\psi$ for 
(left column) the Ehrenfest   dynamics~\eqref{eq:ehrenfest},
and (right column) the modified version~\eqref{eq:mod-ehrenfest} with $\delta = 0.1,$ $t\in[0,8],$ and $\gamma_E = 4$
in the 1D-case \eqref{eq:potential-1d}.
We choose the constant $\epsilon$ as $0.512$ (first row), $0.001$ (second row).
}\label{fig:EH-randomperturb-energy}
\end{figure}
\begin{figure}[h!]
\centering
\includegraphics[width=0.45\textwidth]{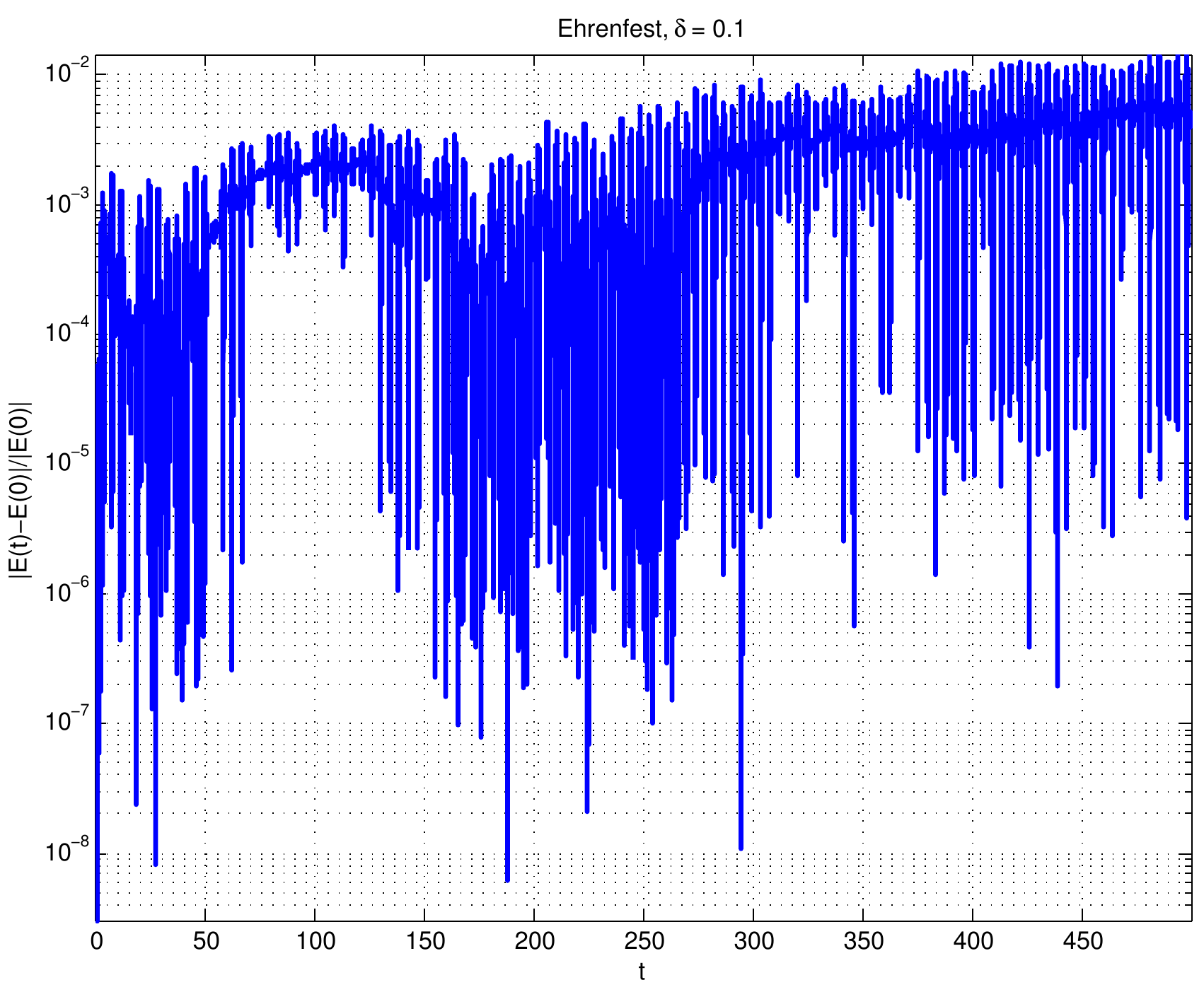}
\includegraphics[width=0.45\textwidth]{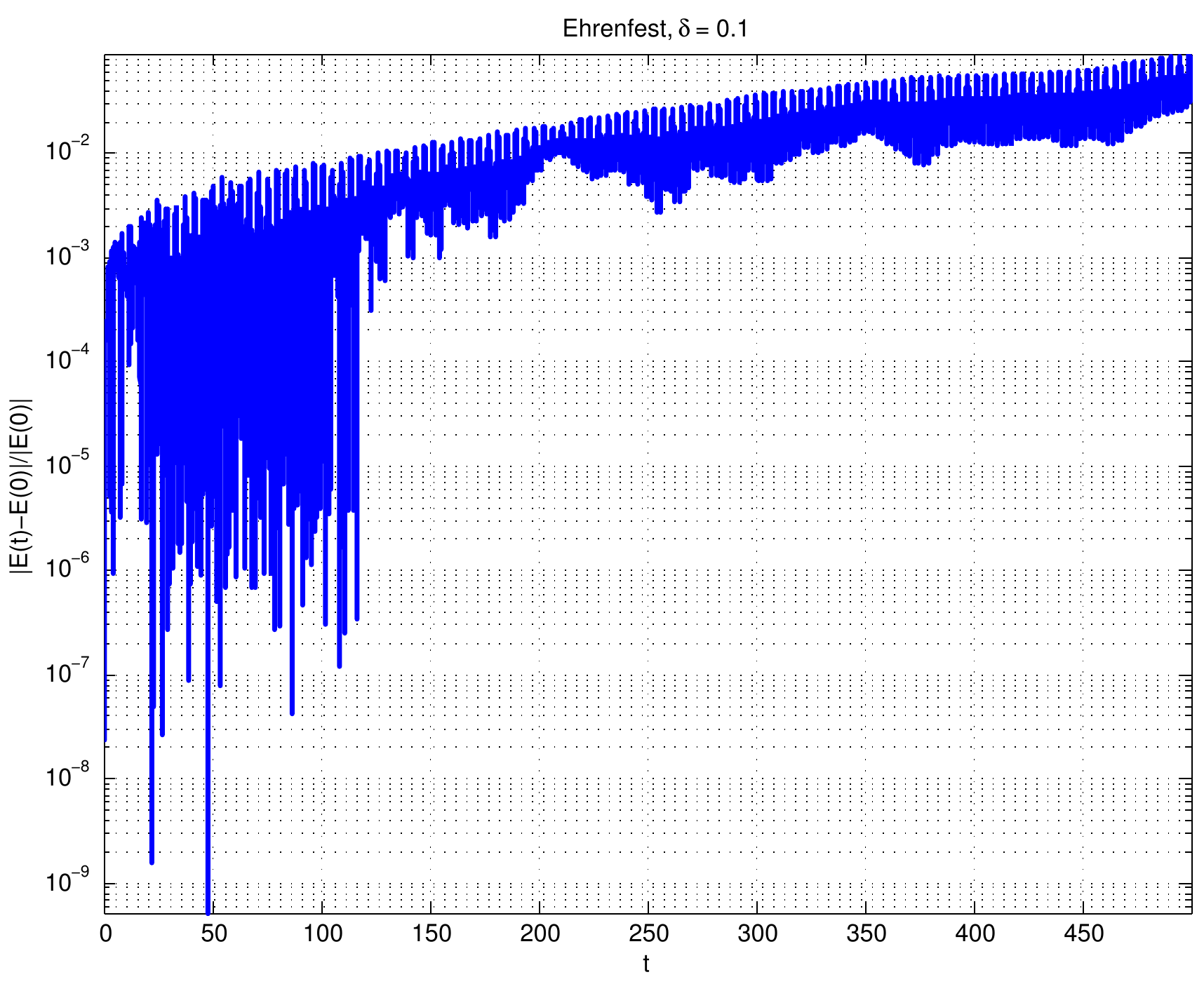} \\
\includegraphics[width=0.45\textwidth]{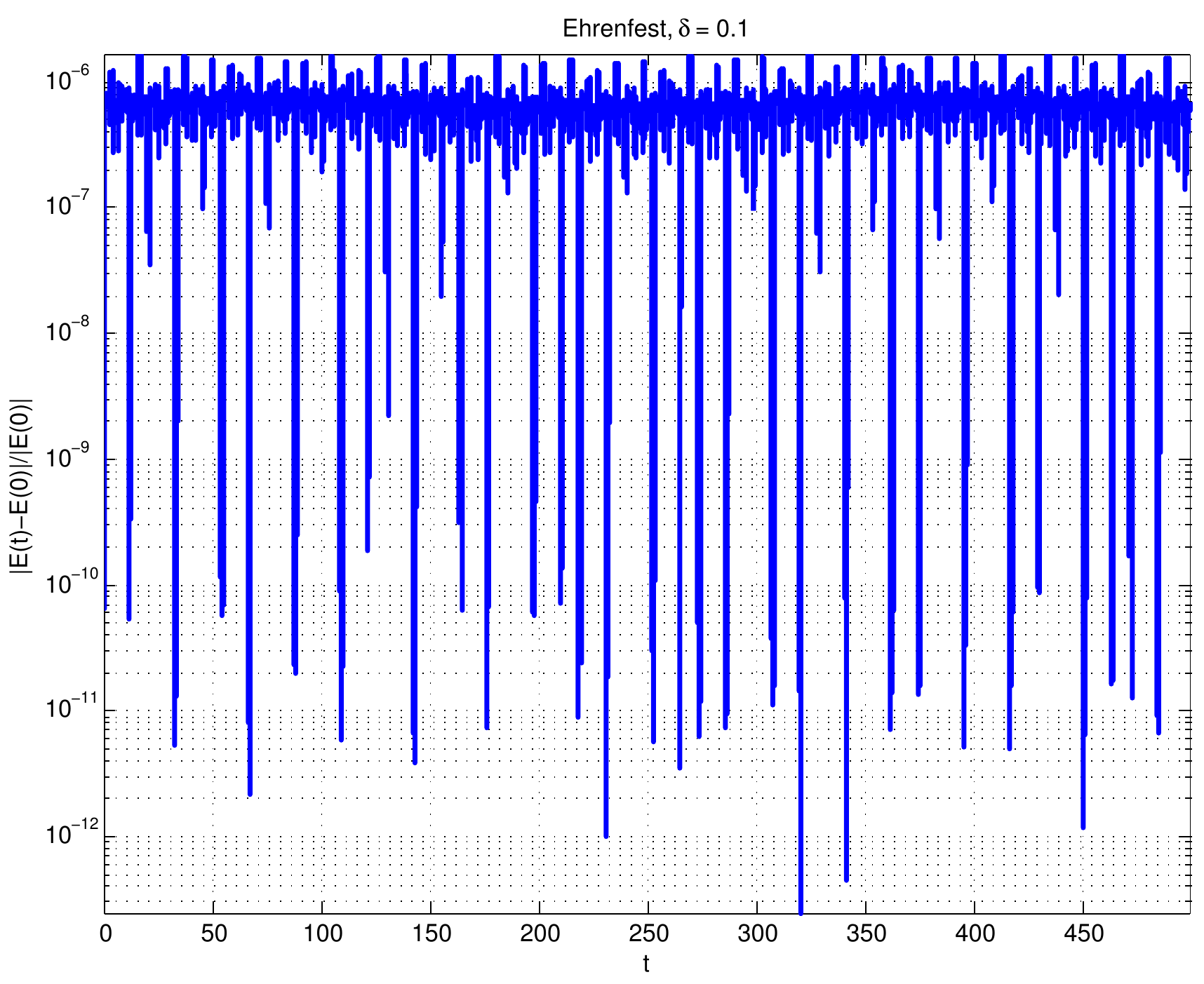}
\includegraphics[width=0.45\textwidth]{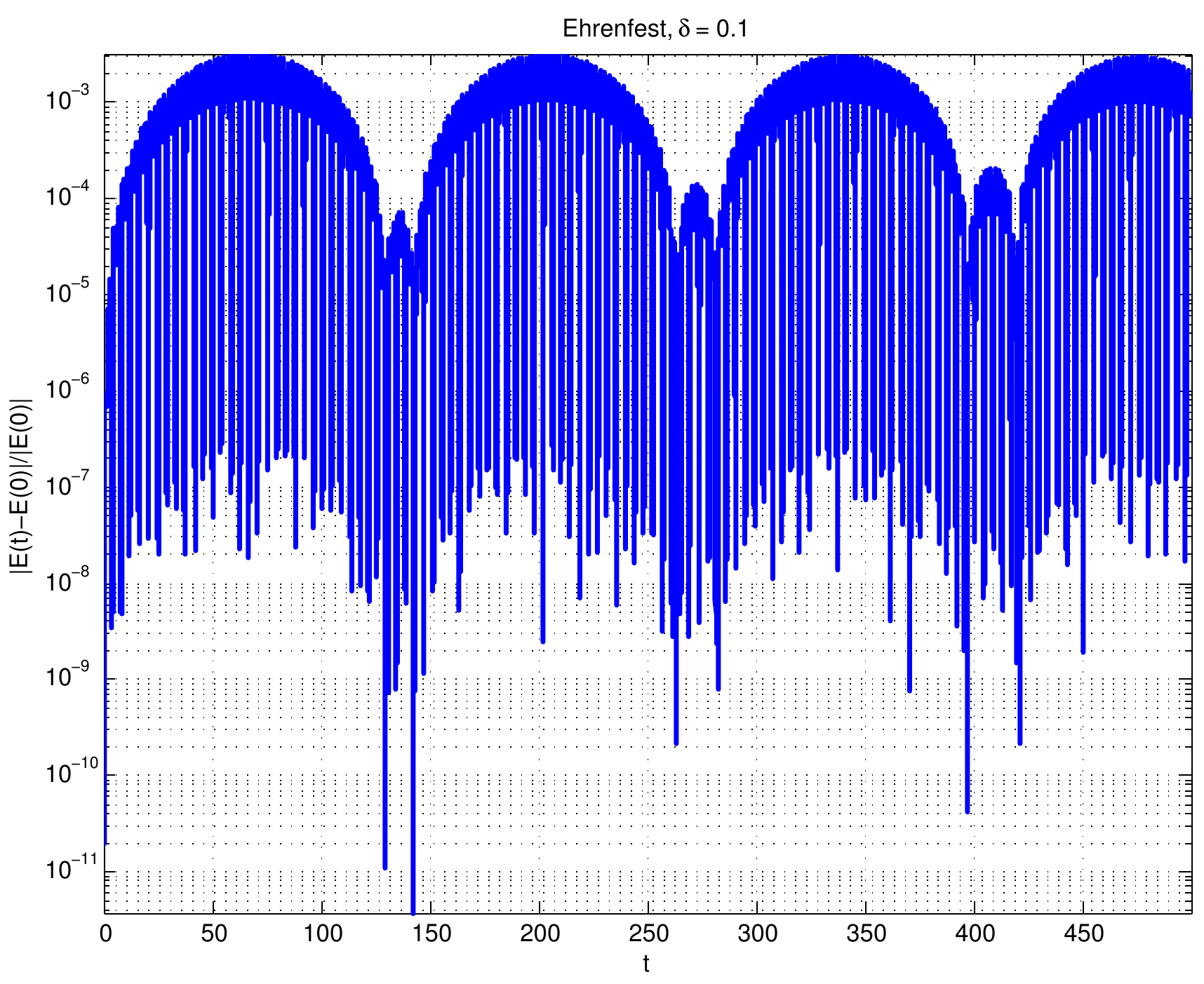}
\caption{Plots showing the energy deviation for the algorithm based on random perturbation \eqref{eq:adaptiveM-EH-random} in $\dot\psi$ for 
(left column) the Ehrenfest  dynamics~\eqref{eq:ehrenfest},
and (right column) the modified version~\eqref{eq:mod-ehrenfest} with $\delta = 0.1,$ $t\in[0,500],$ and $\gamma_E = 4$ in the 1D-case \eqref{eq:potential-1d}.
We choose the constant $\epsilon$ as $0.064$ (first row), $0.004$ (second row).
}\label{fig:EH-randomperturb-energy-deviation}
\end{figure}
\subsection{An alternative adaptive mass algorithm}\label{sec:alt}
We test an alternative criterion for the time-dependent mass parameter $M_E(t)$
for both the standard Ehrenfest molecular dynamics~\eqref{eq:ehrenfest} and the modified translated Ehrenfest
dynamics model~\eqref{eq:mod-ehrenfest} by choosing
\begin{equation}\label{eq:adaptiveM-EH-random}
\hat M_E(t) := \epsilon^{-2}\max\left(1, (\frac{|P_t|^{1/2}}{|\hat\mu_1(X_t)-\mu_0(X_t)|})^{\gamma_E}\right)\COMMA
\end{equation}
where
\[%\begin{equation}\label{eq:mu-hat}
\hat\mu_1(X_t) := \frac{\langle\hat\psi_t, V(X_t)\hat\psi_t\rangle}{\langle \hat\psi_t, \hat\psi_t\rangle}\COMMA
\]%\end{equation}
with
\begin{equation*}
\hat\psi_t := \dot\psi_t + \frac{\mathbb{\xi}}{\max(\hat M_E(t-\Delta t),1/\epsilon^2)}\COMMA
\end{equation*}
and
\begin{equation*}
\hat\psi_0 := \dot\psi_0 + \mathbb{\xi}\epsilon^2\COMMA
\end{equation*}
where $\mathbb{\xi}$ is a random vector of the same dimension as $\dot\psi$
with independent components that are uniformly distributed random numbers on the interval $(0,1)$ and
$\Delta t$ is the size of the time-steps. The motivation for this test is that without the random perturbations the standard method \eqref{eq:ehrenfest} exhibits numerical roundoff problems and  
the random perturbations seems to cure this problem, at the expense of complicating the algorithm.

Figure~\ref{fig:EH-randomperturb-meanpE} shows that the computed mean value of the excitation probabilities, $p_E$, coincide 
for the Ehrenfest  molecular dynamics~\eqref{eq:ehrenfest} and the modified verstion~\eqref{eq:mod-ehrenfest}.
Figure~\ref{fig:EH-randomperturb-energy} illustrates the approximations of the eigenvalues and 
Figure~\ref{fig:EH-randomperturb-energy-deviation} determines the energy deviation over time. The result is that 
the Ehrenfest dynamics model~\eqref{eq:ehrenfest} preserves energy better than the modified version \eqref{eq:mod-ehrenfest}.
%%%%%%%%%%%%%%%%%%%%%%%%%%%%%%%%%%%%%%%%%%%%%%%%%%%%%%%%%%%%%%%%%%%%%%%%%%%%%%%%
\section{Conclusions}
%In this work, we show by numerical examples in two simple model problems 
%that the computational efficiency of Ehrenfest molecular dynamics
%and Car-Parrinello molecular dynamics can be simplified and improved using adaptive 
%mass ratios, in particularly for the cases of small spectral gaps in avoided crossings of electron potential surfaces.

In this work, we show by numerical examples in two simple model problems that the
computational efficiency of Ehrenfest molecular dynamics and Car-Parrinello molecular
dynamics can be simplified and improved using adaptive mass ratios, in particular, for
the cases of small spectral gaps in avoided crossings of electron potential surfaces.
We use the Landau-Zener probability and its generalization to Ehrenfest dynamics
to motivate the choice of the adaptive mass.
A two-dimensional numerical example shows that the use of time-dependent adaptive mass in Ehrenfest 
molecular dynamics gives improved computational efficiency over the use of uniform mass. 
%Numerical test shows $\mathcal O(T^{-1/2})$ convergence rate for Ehrenfest molecular dynamics observables.
%A one-dimensional numerical example shows the computational efficiency of the adaptive mass for Car-Parrinello
%and Ehrenfest molecular dynamics. 
We also observable that Car-Parrinello dynamics is more efficient than Ehrenfest dynamics for a one dimensional test case.

%The numerical example illustrates in this simple model problem the efficiency obtained by determining the artifical mass parameter adaptively. Also, this systematic
%choice of the mass parameter reduces experimental complexity.

\section{Appendix: The Ehrenfest approximation}\label{sec:bo}
The purpose of this section is to analyze how well Ehrenfest solutions approximate
Born-Oppenheimer solutions,  in
the sense that the amplitude function, $\psi$, in the Ehrenfest dynamics is close to the electron ground state, $\Psi_0$,
%$\|\tpsi-\Psi_{\BO}\|_{L^2(dx)}$ is small
when a spectral gap condition holds.
%
%{\it The spectral gap condition.}
The electron eigenvalues $\{\lambda_k\}$
satisfy the spectral gap condition provided there is  a positive $\delta$ such that
\begin{equation}\label{gap_c1}
\begin{split}
   &\inf_{k\ne 0,\ Y\in D} |\lambda_k(Y)-\lambda_0(Y)|>\delta\COMMA\\
&\sup_{ Y\in D} |\nabla\Psi_0(Y)|<\delta^{-1}\COMMA   
\end{split}
\end{equation}
where $D:=\{X_t \SEP t\ge 0\}$
%\{X_{\SCH}(t) \SEP t\ge 0\}\cup\{X_{MD}(t) \SEP t\ge 0\}$ 
is the set of all nuclei positions
obtained for the Ehrenfest dynamics  \eqref{eq:ehrenfest} with initial energy $|P_0|^2/2+\langle \psi_0,V(X_0)\psi_0\rangle/\langle \psi_0,\psi_0\rangle=E$,
%$\{X_t\}$ in 
%Theorem~\ref{thr:schrodinger-hamiltonian_first} 
%and from
%the classical molecular dynamics $X=X_{MD}$ in \eqref{md_ground}, 
for all considered initial data. 
If $(X_t,P_t,\psi_t)$ solves the standard Ehrenfest dynamics \eqref{eq:ehrenfest}, then
$t\mapsto(X_t,P_t,\psi_te^{iM^{1/2}\int_0^t\lambda_0(X_s)ds})$ solves the translated Ehrenfest
dynamics 
%Let \[
%\mbox{$\psi(t):=\varphi(t)e^{iM^{1/2}\int_0^t\lambda_0(X_s)ds}$ 
%and $\tilde V(X):=V(X)-\lambda_0(X)$},\]
% where $\varphi$ is a solution to
%the Ehrenfest dynamics \eqref{eq:ehrenfest}. Then $\psi$ solves the translated Ehrenfest dynamics
\begin{equation}\label{ehren_trans}
\begin{split}
\ddot X_t&=-\nabla\lambda_0(X_t) -\frac{\langle\psi_t,\nabla\tilde V(X_t)\psi_t\rangle}{\langle\psi_t,\psi_t\rangle}\, ,\\
i\dot\psi_t&=M^{1/2}\tilde V(X_t)\psi_t\, .
\end{split}
\end{equation}
The following result shows an estimate of  transition probability $p_E=|\psi^\perp|^2=\mathcal O(M^{-1})$ 
for this case with a spectral gap.

% We know from Section~\ref{spekral_decomp} that the solution $\tpsi_t=\SOPER_{t,0}\tpsi_0$ is bounded
% in $L^2(dX)^J$, since $\SOPER$ is unitary in the Lagrangian coordinates.
% This unitary $\SOPER$ implies that the integral in the Lagrangian
% coordinates $\int_{\tset^{3N-1}} \LPROD{\tpsi_t}{\tpsi_t } \, d X_0$ is constant in time.
% We consider the co-dimension one set 
% \[
%    I_{\tpsi}:= \{ X\in \rset^{3N} \SEP  \LPROD{\tpsi(X)}{\tpsi(X)}
%              =\int_{\tset^{3N-1}} \LPROD{\tpsi(t,X_0)}{\tpsi(t,X_0)}\,  d X_0
%                        /\int_{\tset^{3N-1}}\, d X_0\}\COMMA
%\]
%where the  point values of $\LPROD{\tpsi(X)}{\tpsi(X)}$
%coincides with its $L^2$ average.  
%We choose a time $t$ such that $X_t\in I_{\tpsi}$ and assume that the time $\tau^*$
%it takes to hit $I_{\tpsi}$ the next time  is bounded, i.e.,
% \[
%   \tau^*:=\inf\{ \tau\SEP X_{t}\in I_{\tpsi},\, \tau>0\  \& \ X_{t+\tau}\in I_{\tpsi}\}=\BIGO(1)\PERIOD
% \]
% We also assume that all functions of $X$ are smooth.

\begin{lemma}\label{born_oppen_lemma}
  Assume that $(X,\psi)$  satisfies \eqref{ehren_trans} with initial condition $\psi^\perp(0)=0$,
  the eigenvalue pairs $\{\lambda_k(X),\Psi_k(X)\}_{k=0}^\infty$ of $V(X)$ are smooth functions
  of $X$, and
  the spectral gap condition \eqref{gap_c1} holds.
  Then for each  $t\in\rset_+$  the orthogonal decomposition 
  $\tpsi_t=\psi_0(t) \oplus\psi^\perp(t)$ in $\mathbb C^n$,
  where $\psi_0(t)=\alpha_t\Psi_{\BO}(X_t)$
  for some $\alpha_t\in\mathbb C$, satisfies
  \begin{equation}\label{omega_estimate}
    \begin{split}
    |\psi^\perp(t)| &\le CM^{-1/2}\\
       %\frac{\|\psi_0^\perp(t)\|_{L^2(dx)}}{\|\bar\psi_0(t)\|_{L^2(dx)}}&=\BIGO(M^{-1/2})\\
      % |\LPROD{\tpsi_t}{ \tpsi_t} - 1|&=\BIGO(M^{-1})\\
      % \|\tpsi_t -\Psi_{\BO}(X_t)\|_{L^2(dx)}&=\BIGO(M^{-1/2})\\
    \end{split}
  \end{equation}
for a constant $C=\mathcal O(\delta_*^{-2})$ as 
$\delta_*:=\min_{t\in\rset_+}|\lambda_1(X_t)-\lambda_0(X_t)||P_t|^{-1/2}\rightarrow 0+$. 
 %uniformly in time $t$, 
%  provided $\psi^\perp(0)=0$, the spectral gap condition \eqref{gap_c1} 
%and   the smoothness assumption \eqref{beta_est} is satisfied.
%  and the hitting time $\tau^*$ is bounded.
\end{lemma}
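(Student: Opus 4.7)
The plan is to expand $\psi_t$ in the instantaneous electronic eigenbasis $\{\Psi_k(X_t)\}$ of $V(X_t)$ and derive bounds on the excited-state coefficients through oscillatory integration by parts. Setting $c_k(t):=\langle\Psi_k(X_t),\psi_t\rangle$ gives $|\psi^\perp(t)|^2=\sum_{k\geq 1}|c_k(t)|^2$. Differentiating and using the translated Ehrenfest equation $i\dot\psi_t=M^{1/2}\tilde V(X_t)\psi_t$ from \eqref{ehren_trans} together with $\tilde V\Psi_k=(\lambda_k-\lambda_0)\Psi_k$ yields the linear ODE
\begin{equation*}
\dot c_k=-iM^{1/2}(\lambda_k-\lambda_0)(X_t)\,c_k+r_k(t),\qquad r_k(t):=\langle\nabla\Psi_k(X_t)\cdot P_t,\psi_t\rangle,
\end{equation*}
with $c_k(0)=0$ for each $k\geq 1$ by the hypothesis $\psi^\perp(0)=0$. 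Duhamel then expresses $c_k(t)$ as an oscillatory integral of the slow forcing $r_k$ against the fast phase $\phi_k(s,t):=M^{1/2}\int_s^t(\lambda_k-\lambda_0)(X_r)dr$.

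Next I would perform a first oscillatory integration by parts using the identity $\partial_s e^{-i\phi_k(s,t)}=iM^{1/2}(\lambda_k-\lambda_0)(X_s)\,e^{-i\phi_k(s,t)}$. This extracts an explicit factor $M^{-1/2}$ and produces boundary terms of the form $r_k(t)/(iM^{1/2}(\lambda_k-\lambda_0)(X_t))$ together with an analogous contribution at $s=0$, plus a remainder integral. The boundary contributions are immediately bounded using \eqref{gap_c1}, the estimate $|\nabla\Psi_k|\leq\delta^{-1}$, and the definition $\delta_*=\min_t|\lambda_1-\lambda_0|/|P|^{1/2}$; each boundary term is at most $C\,|P_t|^{1/2}/(M^{1/2}\delta_*)$ multiplied by a single additional factor $\delta^{-1}$ from $\nabla\Psi_k$.

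The remainder is $M^{-1/2}\int_0^t e^{-i\phi_k(s,t)}\frac{d}{ds}\bigl(r_k(s)/(\lambda_k-\lambda_0)(X_s)\bigr)ds$. The $s$-derivative of the integrand splits into a manifestly slow part (arising from $\dot X=P$, $\dot P=\mathcal{O}(1)$, and $\nabla$ of the smooth eigenquantities) and a fast part $\langle\nabla\Psi_k\cdot P_s,\dot\psi_s\rangle=-iM^{1/2}\sum_j(\lambda_j-\lambda_0)c_j\langle\nabla\Psi_k\cdot P_s,\Psi_j\rangle$. The $j=0$ diagonal contribution vanishes identically because $\lambda_0-\lambda_0=0$; the $j=k$ contribution is a purely imaginary zero-order perturbation of $c_k$ that can be absorbed into the oscillatory phase via Gr\"onwall; and each off-diagonal $j\neq 0,k$ term still oscillates at the residual frequency $M^{1/2}(\lambda_j-\lambda_k)$, so it admits a second integration by parts yielding an additional factor $M^{-1/2}/|\lambda_j-\lambda_k|$. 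Rewriting every gap factor via $|\lambda_k-\lambda_0|\geq\delta_*|P_t|^{1/2}$ then produces $|c_k(t)|\leq C\,\delta_*^{-2}M^{-1/2}$, and summing over the finitely many $k\geq 1$ (since $V$ is $n\times n$) gives \eqref{omega_estimate}.

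The main obstacle is controlling the fast part of $\dot r_k$: at first glance the factor $M^{1/2}$ coming from substituting $\dot\psi$ in $\dot r_k$ cancels the $M^{-1/2}$ gain of the first integration by parts. Taming it requires both the $j=0$ cancellation and the second oscillatory integration by parts on the off-diagonal $j\neq 0,k$ terms; it is precisely this second integration by parts, which introduces another gap denominator $|\lambda_j-\lambda_k|^{-1}\lesssim\delta_*^{-1}|P|^{-1/2}$, that produces the two powers of $\delta_*^{-1}$ appearing in the constant $C$. A secondary technical point is bounding the $s$-derivative of $(\lambda_k-\lambda_0)^{-1}(X_s)$ uniformly, which uses $\dot X_s=P_s$ and the smoothness of $\lambda_k$ together with \eqref{gap_c1} so that the $\delta_*$ dependence can be tracked explicitly through the estimate.
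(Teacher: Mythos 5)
Your overall strategy --- Duhamel for the excited-state amplitudes followed by non-stationary-phase integration by parts that trades one power of $M^{-1/2}$ for one gap denominator --- is the same as the paper's, and your accounting of the two factors of $\delta_*^{-1}$ (one from $(\lambda_k-\lambda_0)^{-1}$, one from $\nabla\Psi_0$) matches \eqref{vtilde}. The structural difference, and the source of a genuine gap, is where the coupling sits. The paper writes the equation for the whole vector $\psi^\perp$ with the full fast generator $iM^{1/2}\tilde V$ absorbed into the propagator $\widetilde S_{\tau,\sigma}$ on the orthogonal complement of $\Psi_0(X_t)$, so the forcing is only $\Pi(\dot\psi_0)=\alpha_t\,\Pi(\nabla\Psi_0(X_t)\cdot P_t)$; since $\dot\alpha_t=\langle\nabla\Psi_0\cdot P_t,\psi_t\rangle=\mathcal O(1)$ (the term $\langle\Psi_0,\dot\psi_t\rangle$ vanishes because $\tilde V\Psi_0=0$), the source and its time derivative are $M$-independent, a single integration by parts suffices, and only the gap $|\lambda_k-\lambda_0|$ and the bound on $|\nabla\Psi_0|$ from \eqref{gap_c1} are ever needed. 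Your componentwise forcing $r_k=\langle\nabla\Psi_k\cdot P_t,\psi_t\rangle$ instead contains the full $\psi_t$, which is what forces you to fight the factor $M^{1/2}$ hidden in $\dot r_k$.

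Your cure for that --- the second integration by parts on the $j\neq 0,k$ terms --- is the step that fails under the stated hypotheses: it divides by $\lambda_j-\lambda_k$, a gap between two \emph{excited} surfaces, while \eqref{gap_c1} only separates $\lambda_0$ from the rest; if excited levels cross along the trajectory the residual frequency $M^{1/2}(\lambda_j-\lambda_k)$ degenerates and the boundary terms of that second integration by parts blow up. (Falling back on Gr\"onwall for those terms would salvage the $\mathcal O(M^{-1/2})$ rate for fixed $t$, but would replace the claimed polynomial constant $C=\mathcal O(\delta_*^{-2})$ by one growing exponentially in $t$ and in the coupling constants.) A secondary issue of the same kind: your boundary terms involve $|\nabla\Psi_k|$ for $k\ge 1$, which \eqref{gap_c1} does not bound --- only $|\nabla\Psi_0|$ is controlled; you need the identity $\langle\nabla\Psi_k\cdot P,\Psi_0\rangle=-\langle\Psi_k,\nabla\Psi_0\cdot P\rangle$ to reduce the leading part of $r_k$ to the controlled quantity. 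Both problems disappear if you treat the excited states as a single block, replacing the scalar denominators $(\lambda_k-\lambda_0)^{-1}$ and $(\lambda_j-\lambda_k)^{-1}$ by the operator $\tilde V^{-1}$ restricted to the orthogonal complement, whose norm is controlled by the gap to $\lambda_0$ alone; that is exactly the paper's formulation in \eqref{s_int_def}--\eqref{vtilde}.
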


\begin{proof}
  We consider the decomposition
  $\tpsi=\psi_0\oplus\psi^\perp$,
  where $\psi_0(\tau)$ is an eigenvector of $\VOPER(X_\tau)$, % in $L^2(dx)$,  
  satisfying $\VOPER(X_\tau)\psi_0(\tau)=\lambda_0(\tau)\psi_0(\tau)$
  for the eigenvalue $\lambda_0(\tau)\in\rset$. 
  This {\it ansatz} is motivated by the zero residual
  \begin{equation}\label{R_residual}
     \ROPER\tpsi:= \dot\tpsi +\Iunit M^{1/2}\tilde \VOPER\tpsi= 0 
  \end{equation}
  and projection of the eigenvector
 \[\begin{split} %
     \LPROD{\PSHARP{(\dot{\psi}_0)}}{\psi_0} &=0 \\ 
     M^{1/2}\tilde \VOPER\psi_0                  &= 0\COMMA 
  \end{split}\] %
  where 
  \begin{equation}\label{natural}
      w(X)=\LPROD{\Psi_{\BO}(X)}{w(X)} \Psi_{\BO}(X)\oplus \PSHARP{w(X)}
  \end{equation}
  denotes the orthogonal decomposition in the eigenvector direction $\Psi_{\BO}(X)$ and
  its orthogonal complement in $\mathbb C^n$, i.e. $\langle\Psi_{\BO}(X),\PSHARP w(X)\rangle =0$. We consider  the linear operator $\ROPER$ 
  in \eqref{R_residual}. % with the given function $\lambda_0$.
%   and then we use a contraction 
%  setting to show that $V_0=\LPROD{\tpsi}{ \VOPER\tpsi}/\LPROD{\tpsi}{\tpsi}$ also works since 
%  $|\psi_0^\perp|$ is small. 
  The orthogonal splitting  $\tpsi=\psi_0\oplus\psi^\perp$ and the projection 
  $\PSHARP{(\cdot)}$ in \eqref{natural} imply
  \[
    \begin{split}
        0 &= \PSHARP{\left(\ROPER(\psi_0+\psi^\bot)\right)}\\
          &= \PSHARP{\left(\ROPER(\psi_0)\right)} + \PSHARP{\left(\ROPER(\psi^\bot)\right)}\\
          &= \PSHARP{(\ROPER\psi_0)} + \PSHARP(\frac{d}{dt}{\psi}^\bot) +\Iunit M^{1/2}\tilde\VOPER\psi^\bot 
        %  +                           \Iunit\PSHARP{\left(\frac{G M^{-1/2}}{2}\LAP_X(G^{-1}\psi^\bot)\right)}
          \COMMA
    \end{split}
  \] 
  where the last step  follows from the orthogonal splitting 
  \[
      \PSHARP{(\tilde\VOPER\psi^\bot)}=\tilde\VOPER\psi^\bot\, .
  \]
    Let $\widetilde \SOPER_{\tau,\sigma}$ be the solution operator from time $\sigma$ to $\tau$, 
%    i.e.
%    $\widetilde \SOPER__{\tau,\sigma}w=\psi_\tau$ where $\psi$ satisfies $i\dot \psi_t=M^{1/2}\tilde V\psi_t$ for $t>\sigma$
%    and $\psi_\sigma=w$.
    for the generator 
  \[
     v\mapsto \Iunit M^{1/2}\tilde Vv
     %(\VOPER-\lambda_0)v + \Iunit\PSHARP{\left(\frac{GM^{-1/2}}{2}\LAP_X(G^{-1}v)\right)}=:
     % \Iunit M^{1/2}\breve \VOPER v
  \]
   on the manifold generated by $\Pi$.
   Then  the perturbation $\psi^\bot$ can be determined from the projected residual
  \[
    \Pi(t)\frac{d}{dt}\big(\psi^\bot(t)\big) %=\big(\dot{\psi}\big)^\bot(t) 
    =-\Iunit M^{1/2} \tilde \VOPER \psi^\bot(t) - \PSHARP{(\ROPER\psi_0(t))}\, .
  \]
  and we have the solution representation 
  \begin{equation}\label{r_int}
    \psi^\bot(\tau)= \widetilde \SOPER_{\tau,0} \psi^\bot(0)
                       -\int_0^\tau  \widetilde \SOPER_{\tau,\sigma} 
                         \PSHARP{\left(\ROPER\psi_0(\sigma)\right)}\, d\sigma\, ,
  \end{equation}
  as motivated in Remark \ref{second_order_change}.
    
  Integration by parts in \eqref{r_int} introduces the factor $M^{-1/2}$ we seek
  \begin{equation}\label{s_int_def}
  \begin{split}
          \int_0^\tau \widetilde \SOPER_{\tau,\sigma} \PSHARP\ROPER{\psi_0(\sigma)}\, d\sigma&=
          \int_0^\tau \Iunit M^{-1/2}\frac{d}{d\sigma} 
                      (\widetilde \SOPER_{\tau,\sigma})\tilde \VOPER^{-1}  \PSHARP\ROPER{\psi_0(\sigma)}\, d\sigma\\
             &=\int_0^\tau \Iunit M^{-1/2}\frac{d}{d\sigma} \left(\widetilde \SOPER_{\tau,\sigma}\tilde \VOPER^{-1}  
                    \PSHARP\ROPER{\psi_0(\sigma)}\right)\, d\sigma\\
             &\qquad -\int_0^\tau \Iunit M^{-1/2} \widetilde \SOPER_{\tau,\sigma}\frac{d}{d\sigma}
                 \left(\tilde \VOPER^{-1}(X_\sigma)  \PSHARP\ROPER{\psi_0(\sigma)}\right)\, d\sigma \\
             &= \Iunit M^{-1/2} \tilde \VOPER^{-1}  \PSHARP\ROPER{\psi_0(\tau)}
                - \Iunit M^{-1/2} \widetilde \SOPER_{\tau,0}\tilde \VOPER^{-1}  \PSHARP\ROPER{\psi_0(0)}\\
             &\qquad -\int_0^t \Iunit M^{-1/2} \widetilde \SOPER_{\tau,\sigma}\frac{d}{d\sigma}
                \left(\tilde \VOPER^{-1}(X_\sigma)  \PSHARP\ROPER{\psi_0(\sigma)}\right)\, d\sigma\PERIOD\\
  \end{split}
  \end{equation}
%  To analyze the integral in the right hand side we will use  the fact 
%  \[
%  \breve \VOPER^{-1}=\left(I +(\VOPER-\lambda_0)^{-1}\left[\breve \VOPER-(\VOPER-\lambda_0)\right]\right)^{-1}
%                      (\VOPER-\lambda_0)^{-1},\]
% which can be verified by multiplying both sides
% from the left by $I +(\VOPER-\lambda_0)^{-1}\left[\breve \VOPER-(\VOPER-\lambda_0)\right]$.
 A spectral decomposition in $\mathbb C^n$, based on the electron 
  eigenpairs $\{\lambda_k,\Psi_k\}_{k=1}^\infty$  satisfying 
  $\VOPER\Psi_k=\lambda_k\Psi_k$ and \eqref{omega_estimate}, 
  then implies
  \begin{equation}\label{vtilde}
  \begin{split}
      \tilde\VOPER^{-1}\PSHARP(\ROPER\psi_{0})
             &=
                      (\VOPER-\lambda_0)^{-1}\PSHARP(\ROPER\psi_{0})\\
             &=\sum_{k\ne 0} 
                    (\lambda_k -\lambda_0)^{-1}\Psi_k \LPROD{\PSHARP(\ROPER\psi_{0})}{ \Psi_k}\\
             &=\sum_{k\ne 0}  (\lambda_k -\lambda_0)^{-1} \Psi_k \LPROD{\PSHARP(\dot\psi_{0})}{\Psi_k}\\
             &=\BIGO(\delta_*^{-2})\, ,
             %+\BIGO(M^{-1})\, ,
  \end{split}
  \end{equation}
  which applied to the integral in the right hand side of \eqref{s_int_def} together with the boundedness of $\widetilde S$
  shows that $|\psi^\bot|=\BIGO(M^{-1/2})$.

  The estimate of the constant $C$  requires an additional idea:
  one can integrate by parts recursively in \eqref{s_int_def} to obtain
\[
  \begin{split}
       \int_0^\tau \widetilde \SOPER_{\tau,\sigma} \PSHARP\ROPER{\psi_0(\sigma)}\, d\sigma
       &=\left[\widetilde \SOPER_{\tau,\sigma}\Big(\BTOPER \tilde\ROPER -\BTOPER \frac{d}{d\sigma}(\BTOPER\tilde\ROPER) + 
               \BTOPER \frac{d}{d\sigma}
               \big(\BTOPER \frac{d}{d\sigma}(\BTOPER\tilde\ROPER)\big)-\ldots\Big)\right]_{\sigma=0}^{\sigma=\tau}\COMMA \\
                      \BTOPER &:= \Iunit M^{-1/2}\tilde \VOPER^{-1}\COMMA\;\;\;
                 \tilde\ROPER := \PSHARP\ROPER{\psi_0(\sigma)}\COMMA
  \end{split} 
\]
so that by \eqref{r_int} we have
  \[
      \psi^\bot(\tau)= \widetilde \SOPER_{\tau,0} \psi^\bot(0)
     - \left[\widetilde \SOPER_{\tau,\sigma}\Big(\BTOPER \tilde\ROPER -\BTOPER \frac{d}{d\sigma}(\BTOPER\tilde\ROPER) 
     + \BTOPER \frac{d}{d\sigma}
               \big(\BTOPER \frac{d}{d\sigma}(\BTOPER\tilde\ROPER)\big)-\ldots\Big)\right]_{\sigma=0}^{\sigma=\tau}\COMMA
\]
 %
%By choosing   
%\[
%  \bar\psi^\bot(\sigma)\Big|_{\sigma=0}
%  =-\Big(\BTOPER \tilde\ROPER(\sigma) -\BTOPER \frac{d}{d\sigma}(\BTOPER\tilde\ROPER)(\sigma) + \BTOPER \frac{d}{d\sigma}
%               \big(\BTOPER \frac{d}{d\sigma}(\BTOPER\tilde\ROPER)\big)(\sigma)-\ldots\Big)\Big|_{\sigma=0}
%\] 
               %
%
which can be written
%we  get
\begin{equation}\label{beta_exp}
  \psi^\bot(\tau) =  -\sum_{n=0}^N \BTOPER_0^n \ROPER_0(\tau) + \widetilde\SOPER_{\tau,0}\sum_{n=0}^N \BTOPER_0^n \ROPER_0(0)+ \mathcal O(M^{-(N+1)/2})\COMMA
\end{equation}
where $\BTOPER_0:=-\Iunit M^{-1/2}\tilde\VOPER^{-1} \tfrac{d}{d\tau}$ and $\ROPER_0:=\Iunit M^{-1/2}\tilde\VOPER^{-1}\tilde\ROPER$.
The leading order term, as $M\rightarrow\infty$, is  
$|\ROPER_0|=\mathcal O(M^{-1/2}\delta_*^{-2})$ which yields $C=\mathcal O(\delta_*^{-2})$.
%Using the smoothness assumption 
%\begin{equation}\label{beta_est}
%   |\BTOPER_0^n\ROPER_0(\tau)|_{}\rightarrow 0 \mbox{ as $n\rightarrow\infty$}
%\end{equation} 
% and \eqref{vtilde} we conclude that this expansion is convergent.
 
%The next step, verifying that also the non linear problem for $V_0$ works, is based on
%the contraction obtained from
%\[
%    V_0- \lambda_0=\frac{\LPROD{\tpsi}{(\VOPER-\lambda_0)\tpsi}}{\LPROD{\tpsi}{\tpsi}}= \BIGO(|\psi_0^\perp|)
%\]
%and that $\psi_0^\perp$ depends on $V_0$  in \eqref{r_int}, \eqref{s_int_def}  and \eqref{vtilde}
%with a multiplicative factor $\BIGO(M^{-1/2})$.
%Finally, to conclude that $|\langle\tpsi,\tpsi\rangle -1|=\BIGO(M^{-1})$, we use the evolution equation
%\[
%    \frac{d}{dt} \LPROD{\tpsi}{\tpsi} = M^{-1/2}|G|^2 \IMAG \LPROD{\Delta\frac{\tpsi}{G}}{ \frac{\tpsi}{G}}
%           =\BIGO(M^{-1})
%\]
%where the last equality uses the obtained bound of $\psi_0^\perp$ in the first part of \eqref{omega_estimate}. 
%The assumption of a finite hitting time $\tau^*$ then implies
%that $|\LPROD{\tpsi}{\tpsi}-1|=\BIGO(\tau^*M^{-1})=\BIGO(M^{-1})$,
%since we may assume that $\LPROD{\tpsi}{\tpsi}=1$ on $I_{\tpsi}$.
\end{proof}

\begin{remark}\label{second_order_change}
The solution representation \eqref{r_int} requires that $\langle \frac{d}{dt}\psi^\bot(t),\Psi_0(X_t)\rangle=0$, which
we explain here. 
 Let $\PSHARPT{\tau}{}$ denote the projection on the orthogonal complement to the eigenvector $\Psi_0(\tau)$.
The second order change in the subspace projection
  \[
     \psi^\bot(\tau+\Delta\tau)=\PSHARPT{\tau+\Delta\tau}{\left(\psi^\bot(\tau+\Delta \tau)\right)}= 
     \PSHARPT{\tau}{\left(\psi^\bot(\tau+\Delta \tau)\right)} + \BIGO(\Delta \tau^2)
  \]
  yields $\PSHARP{\big(\frac{d}{dt}({\psi^\bot})\big)}=\frac{d}{dt}\psi^\bot$ and hence 
  $\langle \frac{d}{dt}\psi^\bot(t),\Psi_0(X_t)\rangle=\langle \PSHARP(t)\frac{d}{dt}\big({\psi^\bot}(t)\big),\Psi_0(X_t)\rangle=0$.  
  To explain the second order change start with a function $v$ satisfying $\langle v,\Psi_{\BO}(X_\tau)\rangle = 0$
  and $\Psi_{\BO}(X_\sigma)=\Psi_{\BO}(X_\tau) + \mathcal O(\Delta \tau)$ for $\sigma\in [\tau,\tau+\Delta \tau]$
  to obtain
  \[
  \begin{split}
  \Pi(\sigma)\big(\Pi(\tau+\Delta\tau)v-\Pi(\tau)v\big)&= 
  \Pi(\sigma)\Big(\langle v,\Psi_{\BO}(X_{\tau})\rangle\Psi_{\BO}(X_{\tau})
  -  \langle v,\Psi_{\BO}(X_{\tau+\Delta\tau})\rangle\Psi_{\BO}(X_{\tau+\Delta\tau})\Big)\\
  &=\Pi(\sigma) \mathcal O(\Delta\tau^2) 
  + \Pi(\sigma)\Big(\langle v,\mathcal O(\Delta\tau)\rangle \Psi_{\BO}(X_\tau)\Big)\\
  & =\mathcal O(\Delta\tau^2) 
  + \mathcal O(\Delta\tau)\Big(\Psi_{\BO}(X_\tau)-\langle\Psi_{\BO}(X_\tau),\Psi_{\BO}(X_\sigma)\rangle \Psi_{\BO}(X_\sigma)\Big)\\
  &=\mathcal O(\Delta\tau^2).
  \end{split}
  \]
\end{remark}

%\begin{remark}[Error estimates for the densities]
%We have the densities
%\begin{align}
%   &\rho_{\SCH} = G^{-2}_{\SCH}\LPROD{\tpsi}{\tpsi} & \mbox{for the Schr\"odinger equation,}\\
%   &\rho_{\BO}  = G_{\BO}^{-2}                      & \mbox{for the Born-Oppenheimer dynamics.}
%\end{align}
%From the stability of the Hamilton-Jacobi equation
%for $\log(|G|^{-2})$  and the estimate $\|\partial_{X^iX^j}(\theta-\tilde\theta)\|_{L^\infty}=\BIGO(M^{-1+\delta})$ 
%in \eqref{theta_der_stab}
%we have
%\[
%    G_{\SCH}^{-2}= G_{\BO}^{-2} +\BIGO(M^{-1 +\delta})\COMMA
%\]
%and Lemma~\ref{born_oppen_lemma} implies 
%\begin{equation}\label{tilde_psi_m}
%   \LPROD{\tpsi}{\tpsi}= 1 + \BIGO(M^{-1})\COMMA
%\end{equation}
%which proves
%\[
%   \rho_{\SCH} =\rho_{\BO} +\BIGO(M^{-1+\delta})\PERIOD
%\]
%\end{remark}

%
%
%
%
%
%
%
%
%
%
%
%

%
%
%
%
%
%

%
%
%
%
%

\section*{Acknowledgment}
The work was supported by the Swedish Research Council, grant 621-2010-5647,
and the Swedish e-Science Research Center.

%\newpage
%\bibliographystyle{amsplain}
%\bibliography{schrod_1}

\providecommand{\bysame}{\leavevmode\hbox to3em{\hrulefill}\thinspace}
\providecommand{\MR}{\relax\ifhmode\unskip\space\fi MR }
% \MRhref is called by the amsart/book/proc definition of \MR.
\providecommand{\MRhref}[2]{%
  \href{http://www.ams.org/mathscinet-getitem?mr=#1}{#2}
}
\providecommand{\href}[2]{#2}

\end{document}